\documentclass[12pt,a4paper]{article}

\usepackage{amsmath,amsfonts,latexsym,amssymb}
\usepackage{verbatim,amsthm,curves,graphics}
\usepackage{mathrsfs}
\usepackage[dvips]{graphicx}
 
\usepackage{latexsym}
\usepackage{mathrsfs} 
\usepackage{epsfig}
\usepackage{longtable}
\usepackage{slashed}
\setcounter{tocdepth}{2}
\usepackage[all]{xy}
\usepackage{graphicx,tikz}
\usetikzlibrary{arrows}
\usepackage{setspace}

\pagestyle{myheadings} 
\markright{Operator product expansion}
\parskip1mm
\usepackage{graphicx,color}
\usepackage{array}
\usepackage{tikz}
\usetikzlibrary{shapes.geometric}
\usetikzlibrary{decorations.pathmorphing}
\usetikzlibrary{arrows,shapes,positioning}
\usetikzlibrary{decorations.markings}
\tikzstyle arrowstyle=[scale=1]
\tikzstyle directed=[postaction={decorate,decoration={markings,mark=at position .65 with {\arrow[arrowstyle]{stealth}}}}]
\tikzstyle reverse directed=[postaction={decorate,decoration={markings,mark=at position .65 with {\arrowreversed[arrowstyle]{stealth};}}}]
\textwidth 16cm
\oddsidemargin 0in
\evensidemargin 0in
\textheight 229mm
\topmargin -5mm

\def\ben{\begin{equation}}
\def\een{\end{equation}}
\def\bena{\begin{eqnarray}}
\def\eena{\end{eqnarray}}

\def\d{{\rm d}}

\def\S{{\mathcal S}}

\def\C{\mathcal{C}}

\def\A{{\mathcal A}}

\def\cG{{\cal G}}
\def\F{{\cal F}}

\def\cP{{\mathcal P}}

\def\T{{\mathcal T}}

\def\INT{{\mathcal V}}

\newcommand{\SOd}{{\rm SO}(d)}

\def\SOdm{{\rm SO}(d-1)}
\def\SOdmm{{\rm SO}(d-2)}
\def\Ed{{\rm E}(d)}


\def\path{\operatorname{Pa}}

\def\root{\mathcal{R}}




\newcommand{\RR}{{\mathbb R}}
\newcommand{\CC}{{\mathbb C}}
\newcommand{\ZZ}{{\mathbb Z}}

\def\D{{\mathcal D}}

\newcommand{\bh}{\bar h}

\renewcommand{\O}{{\mathcal O}}

\renewcommand{\max}{{\operatorname{max}}}
\renewcommand{\min}{{\operatorname{min}}}

\theoremstyle{theorem}
\newtheorem{thm}{Theorem}

\newtheorem{prop}{Proposition}

\newtheorem{defn}{Definition}

\title{Action principle for OPE}
\author{
Stefan Hollands\thanks{\tt stefan.hollands@uni-leipzig.de}\:
\\ \\
{\it Institut f\"ur Theoretische Physik, Universit\"at Leipzig
} \\
{\it Br\"uderstr. 16, Leipzig, D-04103, Germany
}  \\
}

\begin{document}
\maketitle

\begin{abstract}
We formulate an ``action principle'' for the operator product expansion (OPE) describing how a given OPE coefficient changes under 
a deformation induced by a marginal or relevant operator. Our action principle involves no ad-hoc regulator or renormalization and applies to general (Euclidean)
quantum field theories. It implies a natural definition of the renormalization group flow for the OPE coefficients and of coupling constants. When applied to the case of conformal theories, the action principle 
gives a system of coupled dynamical equations for the conformal data. The last result has also recently been derived (without considering tensor structures) independently by
Behan (arXiv:1709.03967) using a different argument. Our results were previously announced and outlined at the meetings ``In memoriam Rudolf Haag'' in September 2016 and the
``Wolfhart Zimmermann memorial symposium'' in May 2017.

\end{abstract}


\section{Introduction}

One possible viewpoint of quantum field theory (QFT) is that the operator product expansion \cite{wilson1969,zimmermann} (OPE) defines a theory, just as the 
equations of motion define a classical field theory.  

Informally, the OPE states that
\ben\label{OPE}
\O_{A_1}(x_1) \cdots \O_{A_n}(x_n) = \sum_B  \C_{A_1 \dots A_n}^B(x_1, \dots, x_n) \O_B(x_n), 
\een
which is understood in the sense of an insertion into a correlation function, and where $\{ \O_A \}$ is the set of all composite operators of the theory; for details see 
sec. \ref{sec:OPE}. Here we present a ``variational-'' or ``action principle'' for the OPE which states how the coefficients in this expansion change under a change in a coupling parameter $g$ of the theory. If the OPE coefficients define the theory, then such a 
formula, relating the derivative w.r.t. $g$ of a coefficient with other coefficients at the same value of $g$, 
should clearly exist, and it should depend only on the OPE coefficients, the kind of deformation (given by a relevant or marginal operator), but not on extraneous 
structures such as arbitrary regulators etc. However, it is not clear a priori exactly what form it should take.

	The path integral suggests a formal action principle, but such naive formulas require renormalization, while we are looking for an intrinsic formula not requiring such extraneous procedures. Such a formula was derived only relatively recently in~\cite{Holland:2015tia} building on earlier work \cite{Holland:2014ifa,Holland:2012fs,kellerkopper1993}. It gives an expression for the derivative of an 2-point OPE coefficient w.r.t 
	$g$ in terms of 2- and 3-point OPEs involving the relevant or marginal operator $\INT$ conjugate to $g$. To close the system, these are supplemented by analogous formulas for the derivative of an $n$-point OPE coefficient w.r.t 
	$g$ in terms of 2- and $(n+1)$-point OPEs. By contrast to similar hierarchies of functionals such as the Dyson-Schwinger equations, 
	our equations have the desired property of being manifestly finite, i.e. are not in need of additional renormalisation or regularization. For antecedents of our equations see \cite{guida} which is somewhat similar in spirit but different in detail and works for relevant perturbations and OPE-coefficients of operators of sufficiently low dimensions. 
	
	The purpose of the present paper is to briefly explain the relationship of the action principle to the renormalisation group, and secondly, to apply it in the case of conformal QFTs (CFTs). In those theories, the $n$-point OPE coefficients and correlation functions 
	can, in principle, be written in terms of the conformal data, i.e. the dimensions $\Delta_i$ of the conformal primaries and the structure constants
	$\lambda_{ijk}^\alpha$, where $\O_i, \O_j, \dots$ are conformal primaries and $\alpha$ is a label for independent tensor structures needed if the operators are not scalar. 
	Our action principle is shown to imply the following dynamical system	
	\ben\label{ev3}
	\begin{split}
A_i \frac{\d}{\d g}  \Delta_i &=  \sum_\alpha \D^\alpha_i \lambda_{\INT ii}^\alpha\\
\frac{\d}{\d g}  \lambda^\mu_{jkl} &= \sum_m \sum_{\alpha\beta} A_m \left(
{}^a \T_{jklm}^{\alpha\beta\mu}  \ \lambda_{\INT jm}^\alpha \lambda_{klm}^\beta +
{}^b \T_{jklm}^{\alpha\beta\mu} \ \lambda_{jkm}^\alpha \lambda_{\INT lm}^\beta +
 {}^c \T_{jklm}^{\alpha\beta\mu} \ \lambda_{\INT km}^\alpha \lambda_{jlm}^\beta
 \right), 
\end{split}
\een 
where  $A_i=\lambda_{ii1}$ is a normalization factor for the 2-point function\footnote{This could be set to $1$ at the expense of a change in $\T$.}, and 
the constants $\T, \D$ are of an entirely representation theoretic nature, i.e. in principle determined by 
the dimensions $\Delta_i$ and the tensor structures of the primary fields $\O_i$. We give formulas for them 
in terms of the spinning conformal blocks in general dimensions. Thus, to apply the dynamical system, one has to know these 
quantities as well as the possible tensor structures. Their determination and classification is a rather complicated problem in practice
that is drawing considerable attention in the conformal bootstrap \cite{ElShowk:2012ht} community, 
see e.g. \cite{Costa:2011dw,Costa:2016xah,schomerus,Karateev:2017jgd,Kravchuk:2017dzd}. 
In $d=2$, there are explicit expressions \cite{osborn}, which lead to explicit formulas in terms of hypergeometric functions also given below. 

If all the coefficients $\T, \D$ could be found (with their dependence on the dimensions $\Delta_i$), then the dynamical system
could be used in combination with a Newton iteration to find the flows $\Delta_{i}(g), \lambda_{ijk}(g)$, starting from a reference CFT. 
A particularly interesting case  is when the starting point $g=0$ corresponds to a Gaussian free field, as is the case in the 
$\mathcal N=2,4$ super CFTs in $d=4$ based on gauge theories. 
Furthermore, this strategy may open up a new avenue in the 
long-standing problem of mathematically establishing the existence of non-trivial QFTs in $d=4$ dimensions. To accomplish this, one would need perhaps not 
completely explicit knowledge of the coefficients $\T, \D$, but at least sufficient control for large dimensions and spins. Furthermore, one would have to 
check the axioms for the resulting OPE coefficients (most notably associativity) which are not immediately evident from \eqref{ev3}. In the context of perturbation theory, 
we have given in \cite{Holland:2015tia} an argument how the action principle can be used to derive associativity order-by-order in perturbation theory, but the argument is rather 
complicated. Nevertheless, it offers hope that this can be done.

The system \eqref{ev3} without tensor structures has also recently been derived independently by
	\cite{connor} using a somewhat different argument (with an explicit computation of the constants $\T$ for $d=1$). 
	Our results were announced and outlined before at the meetings ``In memoriam Rudolf Haag'' in Hamburg in 
	September 2016 \cite{Hamburg} as well as at the
``Wolfhart Zimmermann memorial symposium'' in Munich in May 2017 \cite{Munich}. This work is dedicated to the memory of Wolfhart Zimmermann whose work on the OPE \cite{zimmermann}
has been a major inspiration for us to further study this structure in QFT.

\section{General structure of QFT}

\subsection{Euclidean QFT}

Although this paper is not about axiomatic quantum field theory, 
to set the stage, we first recall the basic properties of 
correlation functions in Euclidean QFT, called the ``Osterwalder-Schrader (OS)-axioms'', see e.g. \cite{glimm}. 
The basic idea is to formulate general properties of Euclidean Green's functions, 
perhaps constructed by a properly defined path integral, 
\ben
\langle \O_{A_1}(x_1) \dots \O_{A_N}(x_N) \rangle
=
\int_\phi \O_{A_1}(x_1) \dots \O_{A_N}(x_N) \exp (-S(\phi)) , 
\een
where $S$ is an action such as e.g. the $\phi^4_d$-model
\ben
S = \int ( |\partial \phi|^2 + m^2 \phi^2 + g\phi^4) \d^d x . 
\een
and where the composite fields are expressions of the 
form $\O_A = \partial^{a_1} \phi \cdots \partial^{a_r} \phi$, where $\partial^{a_i}$ is a multi-derivative.
Of course, the precise definition and renormalization of such a path integral is a very 
complicated matter which has been successfully accomplished only so far for $d=2,3$ in a full, 
non-perturbative manner. In $d=4$, only perturbative constructions exist, and in that case the OS-axioms are 
satisfied in the sense of formal series in $g$--to the extent that they make sense for formal series. 

It is not the case that all reasonable quantum field theories will arise from a classical path integral--many 
counterexamples exist e.g. in $d=2$, and presumably also in higher dimensions. Thus, in general, 
$\O_A$ should not be thought of as somehow -- i.e. modulo renormalization -- corresponding to a polynomial 
of a ``basic field'' and its derivatives. Rather, $A$ is an abstract label that incorporates the type of field and 
its tensor- or spinor structure. The set of labels will carry some extra structure e.g. related to hermitian conjugation, 
Bose/Fermi alternative, etc., as will become clear when we need it. In the above example  $A = \{ a_1, \dots, a_r \}$ is 
basically a multi-index.

The expected properties of the correlation functions $\langle \O_{A_1}(x_1) \dots \O_{A_N}(x_N) \rangle$ are as follows:

\begin{enumerate}
\item[e1)] Each $\S_{A_1 \dots A_n}(x_1, \dots, x_n)\equiv \langle \O_{A_1}(x_1) \dots \O_{A_N}(x_N) \rangle$ is real analytic for non-coinciding points, i.e. on the configuration space $M_n=\{ (x_1, \dots, x_n)  \mid x_i \in {\mathbb R}^d, \ \ x_i \neq x_j \ \ \forall i \neq j \}$.

\item[e2)] (Identity) There is a neutral element $A=1$ in our index set (corresponding intuitively to the identity operator $\O_1 = 1$)
characterized by the following condition.
 If for $1 \le k < n$  we have $A_k = 1$, then 
 \ben
 \S_{A_1 \dots A_n}(x_1, \dots, x_n) = \S_{A_1 \dots \hat A_k \dots A_n}(x_1, \dots, \hat x_k, \dots x_n) \ , 
 \een
 where a hat means that the index/point is omitted.
 
 \item[e3)] (Bose/Fermi alternative): There exists an assignment $A \to F_A \in \{ 0, 1 \}$ such that, for any $1 \le i<j \le n$, we have 
 \ben
  \S_{ \dots A_i \dots A_j \dots }(\dots, x_i, \dots, x_j, \dots) =  (-1)^{F_{A_i} F_{A_j}} \S_{ \dots A_j \dots A_i \dots }(\dots, x_j, \dots,  x_i, \dots) \ , 
 \een
 and such that $F_B = \sum F_{A_i} \ {\rm mod} \ 2$. 
 We think of a field $\O_A$ as a Bose field if $F_A=0$ and as a Fermi field if $F_A = 1$. 

\item[e4)] (Star operation) This axiom states that there is an involutive $*$-operation $ A \mapsto A^*$ on the set of labels $A$ such that 
\ben
\overline{\S_{A_1 \dots A_n}(x_1, \dots, x_n)} = \S_{A_n^* \dots A_1^*}(x_n, \dots, x_1), 
\een
where overbar means component-wise complex conjugation. 

\item[e5)] (Euclidean group action). Each composite field $\O_A$ carries a finite-dimensional representation $D$ of $\SOd$ (or its double cover if spinor fields are present).
For any element $(r,a) \in \Ed = \SOd \ltimes \RR^d$ of the Euclidean group, or its cover ${\rm Spin}(d) \ltimes \RR^d$ in the case of fermionic fields, 
\ben
\S_{A_1 \dots A_n}(r x_1+a, \dots, r x_n+a) = \bigotimes_{j=1}^N D_j(r) \S_{A_1 \dots A_n}(x_1, \dots, x_n).
\een

\item[e6)] (OS reflection positivity) This axiom is the replacement of the positive definite Hilbert space condition in Lorentizian QFT. Its precise 
form can be found in \cite{glimm}.
\end{enumerate}

Although the above properties  allow for a reconstruction of a corresponding Lorentzian QFT
by the famous OS-reconstruction theorem, they somehow do not tell us directly how the composite fields are related algebraically, and, similarly, they do 
not capture ``short distance factorization'' properties that one typically expects in QFT. From a pragmatic viewpoint, the (related) essential shortcoming of the OS-axioms is that we cannot meaningfully say how the QFT changes under a variation of a coupling parameter such as $g$ in $\phi^4_4$-theory. From the formal path integral, one is tempted to write:
\ben
\partial_g \langle \O_{A_1}(x_1) \dots \O_{A_N}(x_N) \rangle = \int \d^4 y \
\langle \O_{A_1}(x_1) \dots \O_{A_N}(x_N) \INT(y) \rangle
\een
where $\INT = -\phi^4$ in this case. The problem is that the integral makes no sense due to the non-integrable short distance 
divergences at $y=x_i$, for which there is nothing intrinsic to the OS-axioms that would tell us how to regularize it. 
To capture these aspects, it seems inevitable to introduce  
the operator product expansion (OPE).  

\subsection{Axioms for OPE}\label{sec:OPE}

In this paper, we take the viewpoint that a QFT is {\em defined} by the OPE. Thus, the burden of a construction of a QFT is (at least) to
give {\em all} OPE coefficients $\{\C_{A_1 \dots A_n}^B(x_1, \dots, x_n;g)\}$ for arbitrary
composite field labels $\{A_i\}$ and arbitrary $n$. Informally, the coefficients fulfill \eqref{OPE} in the sense of an insertion into a correlation function.
The precise meaning of this relation is explained in 10) below.
We have included a dependence on a parameter $g$ in the OPE coefficients, which 
is thought of as representing a coupling constant. One may be interested in the theory at only one value of this parameter, or 
one may be interested in the whole family of theories labelled by $g$. Obviously, there is no need to restrict to a one-parameter family, 
and in principle, the parameters could belong to a manifold or even more general mathematical structures such as orbifolds.
  
To get a bigger conceptual picture, we first give a list of axioms that 
one would like the OPE coefficients in a reasonable Euclidean quantum field theory in $d>1$ dimensions to fulfill, 
see~\cite{hollands1, HollandsWald} for details. 
 \begin{enumerate}
 \item (Analyticity) For each choice of $A_1, \dots, A_n, B$, $(x_1, \dots, x_n) \mapsto \C_{A_1 \dots A_n}^{B}(x_1, \dots, x_n)$ is a complex valued real 
 analytic function on the ``configuration manifold'' $M_n=\{ (x_1, \dots, x_n)  \mid x_i \in {\mathbb R}^d, \ \ x_i \neq x_j \ \ \forall i \neq j \}$. 
 
 \item (Identity) There is a neutral element $A=1$ in our index set (corresponding intuitively to the identity operator $\O_1 = 1$)
characterized by the following condition.
 If for $1 \le k < n$  we have $A_k = 1$, then 
 \ben
 \C_{A_1 \dots A_n}^B(x_1, \dots, x_n) = \C_{A_1 \dots \hat A_k \dots A_n}^B(x_1, \dots, \hat x_k, \dots x_n) \ , 
 \een
 where a hat means that the index/point is omitted.

 \item (Bose/Fermi alternative) There exists an assignment $A \to F_A \in \{ 0, 1 \}$ such that, for any $1 \le i<j < n$, we have 
 \ben
  \C_{ \dots A_i \dots A_j \dots }^B(\dots, x_i, \dots, x_j, \dots) =  (-1)^{F_{A_i} F_{A_j}} \C_{ \dots A_j \dots A_i \dots }^B(\dots, x_j, \dots,  x_i, \dots) \ , 
 \een
 and such that $F_B = \sum F_{A_i} \ {\rm mod} \ 2$ for every non-zero OPE coefficient  $\C_{A_1 \dots A_n}^B(x_1, \dots, x_n)$. 
 We think of a field $\O_A$ as a Bose field if $F_A=0$ and as a Fermi field if $F_A = 1$. 

\item (Base point) A fairly obvious 
axiom specifying how the coefficients change under a change of the base point (in our conventions, this is always the last point, $x_n$
for a product of $n$ operators). 

\item (Dimension and scaling) The exists a assignment of ``dimensions'' $A \mapsto \Delta_A \in {\mathbb R}_+$ such that the identity operator has dimension $\Delta_1=0$
 and such that, for any $\delta>0$, we have\footnote{Note that this axioms is consistent with a scaling behavior of the form
 $\C_{AB}^C(x_1,x_2) \sim p(\log |x_{12}|) |x_{12}|^{-\Delta_A-\Delta_B+\Delta_C}$ for any polynomial $p$, as indeed found in perturbation theory where 
 the $\Delta_A=[A]$ coincide with the engineering dimensions and where the degree of $p$ increases with the loop order.
 }
 \ben
\lim_{\lambda \to 0+} \lambda^{\Delta_{1}+\dots + \Delta_n - \Delta_B +\delta} \C_{A_1 \dots A_n}^B(\lambda x_1, \dots, \lambda x_n) = 0 \ . 
 \een
 The axiom connects the abstract notion of dimension with the short-distance behavior of the coefficients. We assume that the theory is ``rational'' in the sense that 
 there is a finite number of fields--i.e. labels $A$--having dimension less than some given number.\footnote{A reasonable strengthened version of this would be that the ``partition function'' $\sum_A q^{\Delta_A}$ converges for sufficiently small $q>0$.}

\item (Star operation) This axiom states that there is an involutive $*$-operation $ A \mapsto A^*$ on the set of labels $A$, and that the OPE coefficients of the 
starred operators are equal to the hermitian conjugates of the un-starred operators under a change of base point $x_1 \leftrightarrow x_n$. 

\item (Descendants) This axiom states first that there is an operation $A \mapsto \partial A$ on the set of labels which we think 
of in the example of the $\phi^4_4$-model as corresponding to the partial derivative of the composite field and applying the ``Leibniz rule''. 
The non-trivial part of this axiom is that $\O_{\partial A}$ should behave in the OPE just as $\partial \O_A$, i.e. 
\ben
 \C_{A_1 \dots \partial A_k \dots A_n}^B(x_1, \dots, x_n) = \frac{\partial}{\partial x_k}  \C_{A_1 \dots A_n}^B(x_1, \dots, x_n) \ . 
\een
It is also natural to demand that the dimension of an operator to be increased by one under this operation, $\Delta_{\partial A}=\Delta_A+1$. 

\item (Associativity)
To formulate the associativity of the OPE in its most basic form, 
some ``rationality'' assumption is needed stating that the index set should be countable. 
The strongest possible form of the associativity condition is perhaps the following. Let $1< M < N$. 
\ben
 \C_{A_{1}\ldots A_{N}}^{B}(x_{1},\ldots, x_{N})=\sum_{C}  \C_{A_{1}\ldots A_{M}}^{C}(x_{1},\ldots, x_{M})  \C_{C A_{M+1}\ldots A_{N}}^{B}(x_{M},\ldots, x_{N})
\een
holds on the domain defined by 
$\xi \equiv \frac{ \max_{1\leq i\leq M}|x_{i}-x_{M}|}{ \min_{M<j\leq N}|x_{j}-x_{M}| } < 1 $, the the sum over $C$ is required to be absolutely convergent.
We can, and should, also impose a more general condition for convergence of ``nested'' expansions on corresponding domains, see \cite{hollands1}
for details. 

For $N=3$ the 
ratio $\xi = \frac{|x_1-x_2|}{|x_2-x_3|}<1$ geometrically signifies to what extent the triangle of points 
$x_1, x_2, x_3$ is degenerate. Weaker versions of the axiom would only require that the
OPE coefficient factorizes approximately if a subset of points (here $x_1, \dots, x_M$) get closer to each other than to the remaining points, i.e. 
in the limit as $\xi \to 0$.

\item (Euclidean invariance) An obvious axiom expressing that the OPE coefficients are covariant under the action of the Euclidean group in $d$ dimensions, analogous to that for the correlation functions.

\item (Convergence): 
Let $1< M < N$. 
\ben
\langle \O_{A_1}(x_1) \dots \O_{A_N}(x_N) \rangle 
=\sum_{C}  \C_{A_{1}\ldots A_{M}}^{C}(x_{1},\ldots, x_{M}) \langle \O_{C}(x_M) \dots \O_{A_N}(x_N) \rangle
\een
holds on the domain defined by $\xi \equiv \frac{ \max_{1\leq i\leq M} |x_{i}-x_{M}|}{ \min_{M<j\leq N}|x_{j}-x_{M}| } < 1 $, where the sum over $C$ is required to be absolutely convergent for fixed $\{x_i\}$ inside the domain.
This axiom is very similar looking to the associativity condition and becomes identical if we have
$
\C_{A_{1}\ldots A_{N}}^{1}(x_{1},\ldots, x_{N}) = \langle \O_{A_1}(x_1) \dots \O_{A_N}(x_N) \rangle . 
$
One can always apply a field-redefinition (see below) to satisfy this condition, 
essentially by imposing $\langle \O_A(x) \rangle = 0$, i.e. a vanishing 1-point function, for the redefined fields. This condition is not always natural\footnote{Such a normalization is not always natural because it might have non-analytic/non-smooth behavior e.g. near phase transition points
in coupling constant space, or in the presence of boundaries (Casimir effect).}, but commonly imposed in CFTs (see below).
 \end{enumerate}
 
It is natural to consider two quantum field theories to be {\em equivalent} if their OPE coefficients differ by a ``field-redefinition''. 
A field redefinition is abstractly a linear map $Z$ from the space of fields to itself satisfying certain properties. 
Since we are working throughout with a fixed basis of fields $\O_A$ labelled by the indices $A$, we may equivalently say that a field redefinition corresponds
to a ``matrix'', $Z_A{}^B$, the basic properties of which are dictated by the above axioms: The identity axiom 2) implies $Z_1{}^B = 1$ if $B=1$ and $Z_1{}^B=0$ otherwise. 
The Bose-Fermi alternative 3) implies that $Z_A{}^B = 0$ unless $F_A+F_B = 0 \ \ {\rm mod} \ 2$. The dimension and scaling axiom 5) implies that $Z_A{}^B=0$ unless 
$\Delta_A \ge \Delta_B$.  The star operation axiom 6) implies that $\overline{Z_A{}^B} = Z_{A^*}{}^{B^*}$. The Euclidean invariance axiom 9) implies that, viewed as a tensor over $\mathbb{R}^d$, 
$Z_{A}{}^B$ should be invariant under the group action of the Euclidean group in $d$ dimensions. 

\begin{defn}\label{defn1}
Given a matrix $Z_A{}^B$ satisfying these properties, we say that two systems 
$\C{}_{A_1 \dots A_n}^{A_\root}$ and $\widehat \C{}_{A_1 \dots A_n}^{A_\root}$ are equivalent under $Z_A{}^B$ if there holds
\ben
Z_{B_\root}{}^{A_\root} \widehat \C_{A_1 \dots A_n}^{B_\root}(x_1, \dots, x_n) = \C{}_{B_1 \dots B_n}^{A_\root}(x_1, \dots, x_n) \ Z_{A_1}{}^{B_1} \cdots Z_{A_n}{}^{B_n}
\een
for all $n$, all indices, and all $(x_i) \in M_n$. Note that the implicit sums over $B_1, B_2, \dots$ etc. are {\em finite} because $Z_A{}^B=0$ unless 
$\Delta_A \ge \Delta_B$, so there are no convergence issues. 
\end{defn}

The transformation formula given in the definition corresponds to a redefinition
\ben
\widehat \O_A = \sum_{B: \Delta_A \ge \Delta_B} Z_A^B \ \O_B \ 
\een
of the composite fields.

In the context of renormalized perturbation theory, the freedom to redefine fields 
is in one-to-one correspondence with the freedom of choosing different ``renormalization conditions''~\cite{zimmermann1}. There, it is also natural to 
view the OPE coefficients as functions of the coupling constant(s), (in our case $g$) and to allow also a diffeomorphism acting on $g$, as well 
as a dependence of $Z_A{}^B$ on $g$. 
The renormalization group then also has a natural formulation in terms of field-redefinitions, see section~\ref{sec:RG}.

\subsection{Do the axioms hold?}

Of course, it remains to be seen in models to what extent we can actually fulfill this ``wish list.'' 
There are by now many models with strictly relevant interaction for which the OS-axioms and further results about clustering, Borel summability of the 
perturbation series, etc. have been shown. Apart from many CFT models in $d=2$, these include in particular the  $P(\phi)_2$-models in 2 dimensions--scalar fields with stable polynomial interaction--the $\phi^4_3$-model as well as QED$_3$. For a summary of these developments up to the time of printing see the classic book of Glimm-Jaffe \cite{glimm} and references therein. In a noteworthy development, Dimock \cite{Dimock1,Dimock2,Dimock3} has recently picked up an approach of Balaban and reconsidered some of these models and methods. Comprehensive references to Balaban's work can also be found there. Balaban's method is in principle also applicable to treat the non-perturbative UV renormalization of models with marginal interaction such as YM$_4$, but it is not completely clear to the author what has and what has not 
been achieved in his certainly very impressive series of papers in this direction. The full construction of YM$_4$ without infra-red cutoffs is a famous open problem. For the GN$_2$-model the situation is better and an effective action has been constructed non-perturbatively by \cite{gawedzki}, and also by \cite{Feldman:1986ax}. These results basically cover the full OS-axioms.

For models with marginal interactions, there are also many very good results in perturbation theory establishing the OS-axioms and many much more detailed properties in the sense of formal series. The most interesting theories are perhaps the YM$_4$ models, which have by now been treated in 
full mathematical detail in \cite{Frob:2015uqy, Efremov:2017sqi} based on the method of RG-flow equations \cite{polchinski1984,wetterich1993} and the BRST-BV technique. 
Related prior works for the massless  $\phi^4_4$-model are \cite{guidakopper2011, guidakopper2015}, where several important new techniques for massless theories were introduced in this context. 

Concerning the OPE, there are considerably fewer results. In the case of CFTs, L\" uscher and Mack \cite{luescher1976, mack1976} have demonstrated the 
OPE of two operators inside a 3-point function, but not the general version of the OPE and associativity given above. For a convincing theoretical physics style argument that this is the case, see \cite{pappadopuloetal2012}. 
In the perturbative setting, convergence of the OPE (with ``smeared'' spectator fields) for non-conformally invariant models was first shown by 
\cite{Hollands:2011gf} in the massive $\phi^4_4$-model to arbitrary but fixed order in perturbation theory. Later, this was generalized to 
the massless case \cite{Holland:2014pna}, with good error bounds for the remainder in a finite OPE, and also to perturbative YM$_4$-theory \cite{Frob:2016mzv}. 
The OPE and associativity condition in the form given above was established in the massive and massless $\phi^4_4$ model 
by \cite{Holland:2015tia} to arbitrary but finite orders in perturbation theory. This version of the OPE has also been established 
for $d=2$ CFTs within the Vertex Operator Algebra framework \cite{huangkong2005}.

\section{Action principle: How the OPE changes under deformations}

\subsection{General QFTs}

We will now present the ``action principle'' to construct the OPE coefficients, derived in~\cite{Holland:2015tia} in the context of perturbative $\phi^4_4$-theory. The derivation~\cite{Holland:2015tia} of the action principle generalizes straightforwardly to any theory with power counting renormalizable interaction, such as the ${\rm GN}_2$-model, and with modifications due to local gauge symmetry, also to YM$_4$-theory ~\cite{Frob:2016mzv}. It describes how an OPE coefficient, which is itself a complicated function not only of the points $x_i$ but also of the coupling constant $g$, changes when we vary $g$. 

To write it down, we first use a graphical notation. We draw an OPE coefficient 
$\C_{A_1 \dots A_n}^B(x_1, \dots, x_n)$ 
as

\begin{figure}[h!]
\begin{center}
\begin{tikzpicture}[scale=.8, transform shape] 
	    \filldraw (-1, 0) circle (1pt);
	    \filldraw (0, 0) circle (1pt);
	    \filldraw (1, 0) circle (1pt);
	    
	    \draw (0, 3) -- (- 3 , 0);
	    \draw (0, 3) -- (- 2 , 0);
	    \draw (0, 3) -- (3 , 0);

	    \filldraw[color=black, fill=white] (0, 3) circle (2pt) node[above] {$$};
	    \node[below] at (-3, 0) {$ 1 $};
	    \node[below] at (-2, 0) {$ 2 $};
	    \node[below] at (3, 0) {$ n $};      
	\end{tikzpicture}
\end{center}
\end{figure}

Next, we draw a concatenation of OPE coefficients 
$
\C_{A_1C}^B(x_1,x_n) \C_{A_2 \dots A_n}^C(x_2, \dots, x_n)
$
as

\begin{figure}[h!]
\begin{center}
\begin{tikzpicture}[scale=.8, transform shape]
\filldraw (7.5, 0) circle (1pt);
	    \filldraw (8, 0) circle (1pt);
	    \filldraw (8.5, 0) circle (1pt);
	    
	    \draw (7, 3) -- (4, 0);
	    \draw (7, 3) -- (10, 0);
	    \draw (7.7, 2.3) coordinate (a1) -- (5.5, 0) coordinate (c1);
	    \draw (7.7, 2.3) coordinate (a1) -- (6.5, 0) coordinate (c2);
	    
	    \coordinate (c1) at (intersection of a1--c1 and a1--c2); 
	    \filldraw[color=black, fill=white] (c1) circle (2pt);

	    \filldraw[color=black, fill=white] (7,3) circle (2pt) node[above] {$$};  
	    \node[below] at (4, 0) {$ 1 $};
	    \node[below] at (10, 0) {$ n $};  
	    \node[below] at (5.5, 0) {$2$}; 
	    \node[below] at (6.5, 0) {$ 3 $};
	    	\end{tikzpicture}		
\end{center}
\end{figure}

We also write

\pagebreak

\begin{figure}[h!]
\begin{center}
\begin{tikzpicture}[scale=.8, transform shape]  
\node at (-3,1.5) {$\int d^dy$};
	    \filldraw (0, 0) circle (1pt);
	    \filldraw (1, 0) circle (1pt);
	    \filldraw (2, 0) circle (1pt);
	    	    
	    \draw (0, 3) -- (-3, 0);
	    \draw (0, 3) -- (-2, 0);
	    \draw (0, 3) -- (-1, 0); 
	    \draw (0, 3) -- (3, 0);  
	   
	    \filldraw[color=black, fill=white] (0, 3) circle (2pt) node[above] {$$};
	    \node[below] at (-3, 0) {$ y,\INT $};
	    \node[below] at (-2, 0) {$ 1 $}; 
	    \node[below] at (-1, 0) {$ 2 $};
	    \node[below] at (3, 0) {$ n $}; 
\end{tikzpicture}		
\end{center}
\end{figure}

to mean that

\begin{itemize}
\item 
$\INT$ denotes the ``deformation'', given e.g. by $-\phi^4$ in the $\phi^4_4$-model\footnote{Since this model is expected to exist only in the sense of formal 
series in $g$, our action principle is only expected to hold in that sense in this model.} or by $-(\bar \psi \psi)^2$ in the 
GN$_2$-model,  
\item
$\int \d^dy=$ integral over $\{|y-x_n|<L\}$. 
\item 
$L$ is length scale that is part of the definition of the theory; the formula is valid for points such that $|x_i-x_j|<L$. 
\end{itemize}

The ``action principle'' for OPE coefficients is:
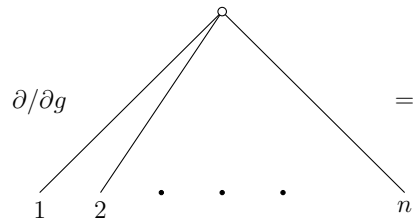
\begin{figure}[htbp]
\begin{center}
\begin{tikzpicture}[scale=.8, transform shape] 
\node at (-3,1.5) {$\partial/\partial g$};
\node at (3,1.5) {$=$};
	    \filldraw (-1, 0) circle (1pt);
	    \filldraw (0, 0) circle (1pt);
	    \filldraw (1, 0) circle (1pt);
	    
	    \draw (0, 3) -- (- 3 , 0);
	    \draw (0, 3) -- (- 2 , 0);
	    \draw (0, 3) -- (3 , 0);

	    \filldraw[color=black, fill=white] (0, 3) circle (2pt) node[above] {$$};
	    \node[below] at (-3, 0) {$ 1 $};
	    \node[below] at (-2, 0) {$ 2 $};
	    \node[below] at (3, 0) {$ n $};      
	\end{tikzpicture}
\end{center}
\caption{Functional equation, left side. The tree represents a coefficient $\C_{A_1 \dots A_n}^B(x_1, \dots, x_n)$}
\label{fig:feql}
\end{figure}
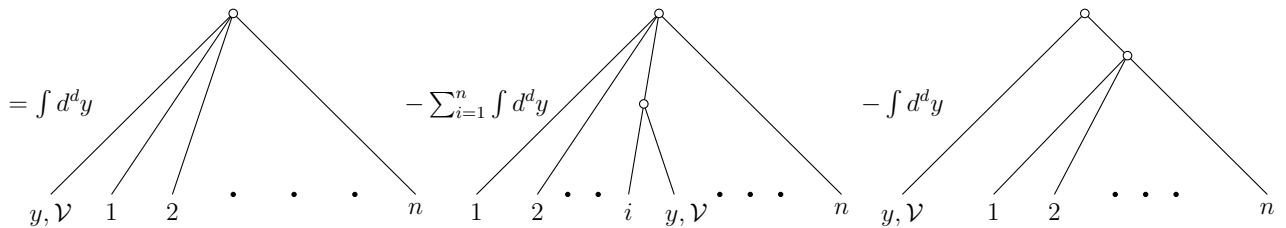
\begin{figure}[htbp]
\begin{center}

\begin{tikzpicture}[scale=.8, transform shape]  
\node at (-10,1.5) {$=\int d^d y$};
	    \filldraw (-7, 0) circle (1pt);
	    \filldraw (-6, 0) circle (1pt);
	    \filldraw (-5, 0) circle (1pt);
	    	    
	    \draw (-7, 3) -- (-10, 0);
	    \draw (-7, 3) -- (-9, 0);
	    \draw (-7, 3) -- (-8, 0); 
	    \draw (-7, 3) -- (-4, 0);  
	   
	    \filldraw[color=black, fill=white] (-7, 3) circle (2pt) node[above] {$$};
	    \node[below] at (-10, 0) {$ y,\INT $};
	    \node[below] at (-9, 0) {$ 1 $}; 
	    \node[below] at (-8, 0) {$ 2 $};
	    \node[below] at (-4, 0) {$ n $}; 
\node at (-3,1.5) {$-\sum_{i=1}^n \int d^d y$};
	    \filldraw (1, 0) circle (1pt);
	    \filldraw (1.5, 0) circle (1pt);
	    \filldraw (2, 0) circle (1pt);
	     \filldraw (-1, 0) circle (1pt);
	     \filldraw (-1.5, 0) circle (1pt);
	    
	    \draw (0, 3) -- (-3, 0);
	    \draw (0, 3) -- (-2, 0);
	    \draw (0, 3 ) coordinate (a) -- (-.5, 0) coordinate (c);
	    \draw (0, 3) -- (3, 0); 
	    \draw ( - .25 , 1.5 ) coordinate (e) -- (0.25, 0) coordinate (e1);
	    
	    \coordinate (c1) at (intersection of a--c and e--e1); 
	    \filldraw[color=black,fill=white] (c1) circle (2pt);

	    \filldraw[color=black, fill=white] (0, 3) circle (2pt) node[above] {$$};  
	    \node[below] at (-3, 0) {$ 1 $};
	    \node[below] at (-2,0) {$ 2$};
	    \node[below] at (-.5, 0) {$ i $};
	    \node[below] at (3, 0) {$ n $};  
	    \node[right] 	 at (- 1, 1.5) {}; 
	    \node[below] at (.45, 0) {$ y,\INT $};      
\node at (4,1.5) {$-\int d^d y$};
\filldraw (7.5, 0) circle (1pt);
	    \filldraw (8, 0) circle (1pt);
	    \filldraw (8.5, 0) circle (1pt);
	    
	    \draw (7, 3) -- (4, 0);
	    \draw (7, 3) -- (10, 0);
	    \draw (7.7, 2.3) coordinate (a1) -- (5.5, 0) coordinate (c1);
	    \draw (7.7, 2.3) coordinate (a1) -- (6.5, 0) coordinate (c2);
	    
	    \coordinate (c1) at (intersection of a1--c1 and a1--c2); 
	    \filldraw[color=black, fill=white] (c1) circle (2pt);

	    \filldraw[color=black, fill=white] (7,3) circle (2pt) node[above] {$$};  
	    \node[below] at (4, 0) {$ y, \INT $};
	    \node[below] at (10, 0) {$ n $};  
	    \node[below] at (5.5, 0) {$1$}; 
	    \node[below] at (6.5, 0) {$ 2 $};
	    	\end{tikzpicture}
		
\end{center}
\caption{Functional equation, right side. The composite trees represent concatenations of coefficients, e.g. the rightmost tree means 
$\sum_C \C_{A_1 \dots A_n}^C(x_1, \dots, x_n) \C_{\INT C}^B(y,x_n)$.}
\label{fig:feqr}
\end{figure}


In symbols, the action principle reads as follows for $n=2$ points:
\ben\label{recurorigpert2}
\begin{split}
& \partial_g \C_{A_1 A_2}^B(x_1,x_2) =    \int\limits_{\epsilon<|y-x_i| \leq L}\d^{d}y\,  \C_{\INT A_1 A_2}^B(y,x_1,x_2)\\
&\qquad -  \sum_{\Delta_C \leq \Delta_1} \int\limits_{\epsilon<|y-x_{1}| < L}\d^{d}y\,  \C_{\INT A_1}^C(y,x_1)   \C_{ CA_2}^B(x_1,x_2)
-  \sum_{\Delta_C \leq \Delta_2} \int\limits_{\epsilon<|y-x_{2}| < L}\d^{d}y\,  \C_{\INT A_2}^C(y,x_2)   \C_{ A_1C}^B(x_1,x_2)
\\
& \qquad - \sum_{\Delta_C<\Delta_B}  \int\limits_{\epsilon<|y-x_{2}|< L}\d^{d}y\,  \C_{A_1 A_2}^C(x_1,x_2)   \C_{\INT C}^B(y,x_2) \, ,
\end{split}
\een
where the limit $\epsilon \to 0^+$ is understood in the end. 
 The terminology ``action principle'' is due to the fact that a derivative with respect to $g$ corresponds to an ``insertion'' of 
the interaction operator $\INT$ in the first term on the right side.  

The action principle~\eqref{recurorigpert2} -- together with the hierarchy of similar relations with more points $\{x_i\}$ -- has the following features:
\begin{enumerate}
\item It {\em only} involves the OPE-coefficients, but no correlation functions.
\item The integral over the ``insertion point'', $y$, is shown to be {\em absolutely convergent}, and in particular free of any
of the seemingly unavoidable UV-divergences in QFT, in the sense that the limit $\epsilon \to 0^+$ can be taken without further 
regulators. Here the idea is that, due to associativity, the terms in the second line of \eqref{recurorigpert2} 
should {\it precisely cancel the UV divergences} of the first line
that would appear in the limit $\epsilon \to 0^+$. 
Similarly, the third line cancels any UV divergences that would appear when $L \to \infty$, again due to associativity.
\item The formula makes no reference to perturbation theory. Thus, given any theory with given OPE coefficients at some value of the 
coupling, we can chose a relevant or marginal interaction $\INT$ and define the deformed theory abstractly as a solution to our action principle. 
\item For $n$ points and more couplings, the generalization is given below in \eqref{recurorigpert2}.
\end{enumerate}

A perturbative version of the action principle was proved in~\cite{Holland:2015tia,Holland:2014ifa} for the $\phi^4_4$-model, 
starting from a definition of the renormalized OPE coefficients using flow 
equations. The rigorous statement is that if we make a power series expansion in $g$ of the OPE coefficients, schematically $\C = \sum g^r \C_r$, around the free Gaussian fixed point $g=0$, then the action principle holds to all orders in $g$. In other words, it was proved that:

\begin{thm}
Let $\INT=-\phi^4$ and $d=4$. Then to any order $r$ in perturbation theory there holds the recursion relation~\eqref{recurorigpert2} (and its obvious variants for 
$n$ point OPE coefficients) where the limit $\epsilon \to 0$ is understood, and where $\Delta_A=[A]$ in the formula is given by the engineering dimension of $\O_A$ at the Gaussian fixed point ($g=0$). When the terms on the right side are written under a single integral sign, then the integral is absolutely convergent
for $\epsilon \to 0$, i.e. all non-integrable singularities of the total integrand cancel out between the different terms. 
\end{thm}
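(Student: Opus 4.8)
The plan is to stay entirely inside the flow-equation construction of the renormalized OPE coefficients of the $\phi^4_4$ model that underlies \cite{Holland:2014ifa,Holland:2015tia}: there $\C_{A_1\dots A_n}^B$ is obtained as the limit $\Lambda\to 0$, $\Lambda_0\to\infty$ of regularized coefficients $\C_{A_1\dots A_n}^{B,\Lambda,\Lambda_0}$ extracted from the flowing generating functional carrying sources for the composite fields, with the coupling $g$ entering through the bare interaction $g\int\phi^4$ together with the $g$-dependent counterterms fixed by the renormalization conditions at the Gaussian point. The proof then splits into two parts: (i) establish \eqref{recurorigpert2} and its $n$-point analogues in \emph{regularized} form, i.e.\ at finite $\Lambda,\Lambda_0$ and with the lower cutoff $\epsilon$ in place, where it reduces to differentiating the flow equations under the integral sign and re-organizing an OPE; and (ii) remove the regulators, showing that the $y$-integral of the combined right-hand side is absolutely convergent as $\epsilon\to 0$, uniformly in $\Lambda_0$ and down to $\Lambda=0$, so that all limits may be interchanged.

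For part (i): differentiating the flowing effective action with respect to $g$ produces an integrated insertion of $\partial_g$ of the bare Lagrangian, i.e.\ $-\int\d^d y\,\phi^4(y)=\int\d^d y\,\INT(y)$ plus local counterterm insertions which the renormalization prescription is designed to absorb. Feeding this into the regularized (hence finite) convergence/associativity property of the OPE and expanding the product $\O_{A_1}(x_1)\cdots\O_{A_n}(x_n)\,\INT(y)$ onto the base point $x_n$, one must distinguish, according to the position of $y$ relative to the cluster $\{x_1,\dots,x_n\}$ and to $x_n$, whether $\INT(y)$ first fuses with some $\O_{A_i}(x_i)$, fuses with the base-point operator $\O_B(x_n)$, or stays attached to the top vertex; resumming the three cases reproduces exactly the three nested tree structures of Figure~\ref{fig:feqr}, i.e.\ the subtraction terms built from $\C_{\INT A_i}^C\,\C_{\dots C\dots}^B$ and from $\C_{\dots}^C\,\C_{\INT C}^B$. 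At finite cutoff every sum and integral is manifestly finite, so this part is essentially bookkeeping; equivalently, one derives a flow equation directly for $\partial_g\C^{\Lambda,\Lambda_0}$ and checks that the right-hand side of \eqref{recurorigpert2} solves it with the correct boundary data, which is the route taken in \cite{Holland:2015tia}.

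Part (ii) is the technical heart. Here I would invoke the Weinberg-type (``tree-graph'') bounds on the regularized OPE coefficients and their $g$-derivatives that come out of the flow-equation induction \cite{polchinski1984,kellerkopper1993,Holland:2014pna,Holland:2014ifa}, of the schematic form $\big|\C_{A_1\dots A_n}^{B,\Lambda,\Lambda_0}(x_1,\dots,x_n)\big|\le \mathrm{const}\cdot(\text{product of powers of }|x_{ij}|)\cdot(\text{polynomial in logarithms})$, uniform in $\Lambda_0$, valid at $\Lambda=0$, and consistent with the scaling axiom 5. These yield existence of the limits and, crucially, control of the short-distance behavior of the combined integrand: as $y\to x_i$ with $i<n$, associativity of the OPE applied to the two-point cluster $\{y,x_i\}$ (valid order by order by the convergence results quoted in the excerpt, \cite{Hollands:2011gf,Holland:2014pna}) shows that $\C_{\INT A_1\dots A_n}^B(y,x_1,\dots,x_n)$ is reproduced by $\sum_{\Delta_C\le\Delta_{A_i}}\C_{\INT A_i}^C(y,x_i)\,\C_{A_1\dots C\dots A_n}^B(x_1,\dots,x_n)$ up to a remainder that is integrable against $\d^d y$ near $y=x_i$ — the subtracted terms (those with $\Delta_C\le\Delta_{A_i}$) being precisely the ones whose $y$-integral would otherwise diverge; the same holds near $y=x_n$ with the $\Delta_C<\Delta_B$ subtraction, while near the sphere $|y-x_n|=L$ the integrand is smooth because that sphere avoids all of the $x_i$. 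Hence the total integrand is absolutely integrable, the $\epsilon\to 0$ limit of the right-hand side exists, and the regularized identity of part (i) passes to the limit.

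The main obstacle is exactly this last estimate: one must establish the \emph{improved} power-counting bound for the subtracted integrand with constants independent of $\Lambda_0$ and valid at $\Lambda=0$, that is, propagate through the flow-equation induction not merely bounds on the individual $\C^{\Lambda,\Lambda_0}$ but bounds on the differences $\C_{\INT A_1\dots A_n}^{B}-\sum_C\C_{\INT A_i}^{C}\,\C_{\dots C\dots}^{B}$ — an OPE remainder estimate carried along the $g$-derivative flow, with heavy combinatorics coming from the nested subtractions and from the counterterm insertions generated by $\partial_g$ of the bare action. In the massless variant one additionally needs the infrared-improved bounds of \cite{guidakopper2011,guidakopper2015,Holland:2014pna} to control $\Lambda\to0$. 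Everything else — the regularized identity and the existence of the limits — is comparatively routine once this machinery is in place.
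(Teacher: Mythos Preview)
Your proposal is correct and essentially matches the approach the paper relies on: the paper does not give a self-contained proof of this theorem but cites \cite{Holland:2015tia,Holland:2014ifa}, where exactly the flow-equation construction you describe is carried out --- regularized OPE coefficients are defined via the Polchinski flow, the recursion relation is first established at finite cutoff by differentiating under $g$ and reorganizing via the regularized associativity/OPE, and the removal of regulators is controlled by the inductive tree-graph bounds (with the massless refinements of \cite{guidakopper2011,Holland:2014pna}). Your identification of the main technical load --- propagating improved bounds for the \emph{subtracted} integrand through the flow-equation induction so that the cancellation of non-integrable singularities near $y=x_i$ and near $y=x_n$ is uniform in $\Lambda_0$ and survives $\Lambda\to 0$ --- is exactly right and is the content of the remainder estimates in \cite{Holland:2015tia}.
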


We view this as convincing evidence for the general validity of the action principle~\eqref{recurorigpert2} even beyond perturbation theory (i.e. formal series), in models where a non-perturbative theory actually exists. The $\phi^4_4$-model is not believed to exist non-perturbatively except for the Gaussian 
fixed point ($g=0$), but we expect the action principle to be true in models where it does, such as in the GN$_2$-model, or the exactly marginal flows described below. At any rate, {\em we will from now on assume the validity of the action principle at the non-perturbative level.}

In case we have more  marginal or relevant couplings denoted by $g^a$ corresponding to marginal or relevant operators $\O_a$ in the action, then a generalized 
action principle can be derived by the same method as in the case of one marginal coupling. The formula is now 
\ben\label{recurorigpert4}
\begin{split}
& \frac{\partial}{\partial g^a} \C_{A_1 \dots A_n}^B(x_1, \dots, x_n) =    \int\limits_{\epsilon<|y-x_i| \leq L}\d^{d}y\,  \C_{a A_1 \dots A_n}^B(y,x_1,\dots, x_n)\\
&\qquad -  \sum_{i=1}^n \sum_{\Delta_C \leq \Delta_i} \int\limits_{\epsilon<|y-x_{i}| < L}\d^{d}y\,  \C_{a A_1}^C(y,x_i)   \C_{ A_1 \dots C \dots A_i}^B(x_1, \dots, x_n)
\\
& \qquad - \sum_{\Delta_C<\Delta_B}  \int\limits_{\epsilon<|y-x_{n}|< L}\d^{d}y\,  \C_{A_1\dots A_n}^C(x_1, \dots, x_n)   \C_{a C}^B(y,x_n) \, ,
\end{split}
\een
where the limit $\epsilon \to 0^+$ is understood in the end, and where we have generalized formula \eqref{recurorigpert2} to $n$ points.

\subsection{Action principle and renormalization group for OPE} \label{sec:RG}

\subsubsection{Geometry of field redefinitions}

As we have emphasized, we are free in general to make 
$g$-dependent redefinitions of the fields $\widehat \O_A^{} = \sum Z^B_A(g) \O_B^{}$, where
for massless theories we should demand on dimensional grounds that $Z^B_A$ has non-vanishing entries for $\Delta_A=\Delta_B$. 
For the so redefined fields, we get new OPE coefficients $\widehat \C_{A_1 \dots A_n}^{B}$ as defined in Definition \ref{defn1}. 

The re-defined coefficients will satisfy a modified action principle. The relation with the action principle before the redefinition is best explained in 
geometric terms. Let us denote by $g^a = (g^1, \dots, g^n)$ the parameters of the theory which we loosely think of as associated with 
terms of the form $\sum g^a \O_a$ in the action--if the theory has one. Each operator $\O_a$ is marginal or relevant, i.e. $\Delta_a \le d$.
Let 
\ben
\A \equiv \A^A_{Ba}(g) \d g^a
\een 
be a ``connection'' in field space, non-zero only for $\Delta_A \le \Delta_B$. A curve $c^a(\tau)$ in coupling constant space is called geodesic 
if its tangent $\dot c^a(\tau)$ is parallel transported to itself under the connection $\A$; in equations
\ben
\ddot c^a = -\A^a_{bc}(c) \, \dot c^b \dot c^c. 
\een
If we consider curves starting at $g=0$, then they are uniquely given (locally) once we give $\widehat g^a=\dot c^a(0)$. Geodesic normal coordinates around $g^a=0$ are defined by assigning to $\widehat g^a$ the value $c^a(1)$ in coupling constant space (this is the ``exponential map'' in geometric terms, 
$g={\rm Exp}_0^\A \widehat g$). Define $Z(\widehat g)$ 
to be the parallel transport (holonomy) along a geodesic from $0$ to $\widehat g$ under $\A$, in formulas 
\ben
Z(\hat g) = T \, {\rm exp} \int_0^1 \A_a(c(\tau)) \dot c^a(\tau) \, \d \tau.
\een  
The components of $Z$ are denoted by $Z^A_B$. They are non-zero only if $\Delta_B \le \Delta_A$. We use this $Z^A_B$ in order to define 
a new set of OPE coefficients $\widehat \C$ as in Definition \ref{defn1}. These new coefficients are viewed as functions of $\widehat g$. 
Then it is easily seen that the action principle for $\widehat \C$ is the same as that (see \eqref{recurorigpert4}) for $\C$, except that we have to replace the hatted coefficients 
everywhere, and we have to make the replacement on the left side:
\ben\label{redef}
\frac{\partial}{\partial g^a} 
\C_{A_1 \dots A_n}^{B}
\to 
\frac{\partial}{\partial \widehat g^a} 
\widehat \C_{A_1 \dots A_n}^{B} +
\sum_i \A^{C_i}_{A_i a}
\widehat \C_{A_1 \dots C_i \dots A_n}^{B}
- \A^{B}_{C} \widehat \C_{A_1 \dots A_n}^{C}
\een
where the hatted quantities and $\A$ on the right side are now viewed as functions of $\hat g$. So, geometrically, the 
partial derivative $\frac{\partial}{\partial g^a}$ (parallel transport with trivial connection) is replaced by parallel transport 
along the geodesic tangent to $\frac{\partial}{\partial \hat g^a}$ with the connection $\A$. In practice, we will use field redefinition to remove such 
unwanted terms involving $\A$ below.

In case we have a perturbatively defined theory, then all statements must be understood in the sense of formal series in $g^a$. 

\subsubsection{Renormalization group (RG) flow}

The action principle is closely related to a version of the RG-flow. The discussion is again completely general, but to be concrete, we can think of a model like the ${\rm GN}_2$-model or the $\phi^4_4$ model. The OPE coefficients are defined as formal series in that case $g$ and the action principle is understood order-by-order in powers of $g$. To avoid a cumbersome discussion of relevant interactions, we assume that there is only one marginal coupling constant, and that 
the renormalized mass is taken to be $m=0$ (at each order in perturbation theory). 
Now we ask how the OPE-coefficients change if we change $L$ to some (larger) value $\widehat L= e^t L$, where $t >0$. Let the new OPE coefficients 
(determined again by the action principle \eqref{recurorigpert2} and its obvious generalization to $n$ points) be denoted by $\widehat \C_{A_1 \dots A_n}^{B}$. Let us define the RG ``time'' as usual by
\ben
t=\log \frac{\widehat L}{L} \ ,
\een
assuming that $L,\widehat L$ are so large that $L,\widehat L>|x_{ij}|$ for any pair of points considered in an OPE coefficient.

The answer to this question is closely related to the geometric definition of the field redefinitions just explained. More precisely, the following proposition expresses the renormalization group:

\begin{prop}
In a massless perturbative theory with only one marginal interaction $\INT$, 
there exist formal power series $f(t,g) = g + \sum_{r\ge2} f_r(t) g^r$ and $Z_A^B(g,t)=\delta_A^B + \sum_{r \ge 1} Z_r{}_A^B(t) g^r$  such that the 
OPE-coefficients $\widehat \C_{A_1 \dots A_n}^{A_\root}(g)$ for IR-cutoff $\widehat L$ are equivalent to the OPE-coefficients $\C_{A_1 \dots A_n}^{A_\root}(f(t,g))$ for IR-cutoff $L$
under the field redefinition 
\ben
\widehat \O_A = \sum_B Z^B_A(g,t) \O^{}_B \ , 
\een
where  $Z_A^B(g,t)$ vanishes if $[B]\neq [A]$. (Here $[A]$ is the engineering dimension of a field $\O_A$). 
\end{prop}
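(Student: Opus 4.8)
The plan is to extract the statement from the structure of the action principle~\eqref{recurorigpert2} (and its $n$-point versions) by exploiting its one explicit dependence on the scale $L$, and then to recognize the resulting dependence as a combination of a flow in the coupling and a field redefinition. The only place where $L$ enters the action principle is as the outer radius of the $y$-integration in the last ("$\INT C$") term, and implicitly as an upper bound on the separations $|x_{ij}|$ for which the formula is valid. So the first step is to differentiate the action principle with respect to $\log L$ at fixed $g$. Writing $L\,\partial_L$ of~\eqref{recurorigpert2}, the only contribution comes from the boundary $|y-x_n|=L$ of the last integral; since $L$ is already assumed much larger than all $|x_{ij}|$, one can use the $n$-point associativity/convergence axioms to replace $\C_{A_1\dots A_n}^C(x_1,\dots,x_n)$ on that spherical shell by its OPE about $x_n$ and collapse it, producing a boundary term that is again built purely out of OPE coefficients $\C_{\INT C}^B$ evaluated near $|y-x_n|\sim L$, contracted against $\C_{A_1\dots A_n}^C$.

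The second step is to massage this boundary term into the form of the infinitesimal generator of the combined operation (coupling flow) $+$ (field redefinition) $+$ ($g$-reparametrization), i.e. precisely the right side of~\eqref{redef}. Concretely, one expects $L\,\partial_L \C_{A_1\dots A_n}^B$ to equal an expression of the schematic shape $-\beta(g)\,\partial_g \C + \sum_i \gamma_{A_i}{}^{C_i}\,\C_{\dots C_i\dots}^B - \gamma^B{}_C\,\C_{\dots}^C$, where $\beta$ and the "anomalous dimension matrix" $\gamma_A{}^B$ are themselves formal power series in $g$ read off from the boundary integrals of $\C_{\INT C}^B$ (these vanish at $g=0$, giving the stated $O(g)$ and $O(g^2)$ start of the series). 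At this point one invokes the geometric picture of Section~\ref{sec:RG}: $\gamma_A{}^B$ plays the role of the connection $\A^B_{A}$ along the one-dimensional $g$-line, so the transport equation $L\,\partial_L(\cdot)=$ [that generator] is solved by simultaneously (i) flowing $g\mapsto f(t,g)$ along the integral curve of $\beta$, with $f(0,g)=g$, and (ii) conjugating by the path-ordered exponential $Z_A^B(g,t)=T\exp\int_0^t \gamma\,\d s$, exactly the holonomy $Z$ of Definition~\ref{defn1}. Because $\gamma_A{}^B$ is nonzero only between fields of equal engineering dimension (a consequence of masslessness, $m=0$, and the dimension/scaling axiom, as noted before Definition~\ref{defn1}), $Z_A^B(g,t)$ vanishes unless $[A]=[B]$, which is the last claim. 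One then checks that the hatted coefficients so produced do satisfy the action principle with IR-cutoff $\widehat L=e^tL$ — this is automatic since they are, by construction, equivalent (under $Z$) to $\C(f(t,g))$, and equivalence in the sense of Definition~\ref{defn1} transforms the action principle for cutoff $L$ into the one for $\widehat L$ via~\eqref{redef}.

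Two bookkeeping points must be handled with the order-by-order discipline the Proposition allows. First, all of the above is to be read as an identity of formal power series in $g$: at order $r$, the boundary term involves only OPE coefficients through order $r$, so $\beta_r,\gamma_r$ are well defined and the ODE for $f$ and the linear transport equation for $Z$ close order-by-order, yielding the asserted expansions $f=g+\sum_{r\ge2}f_r(t)g^r$ and $Z=\delta+\sum_{r\ge1}Z_r(t)g^r$ with $f_r(0)=0$, $Z_r(0)=0$. Second, one must confirm that the "unwanted" connection terms $\A$ of~\eqref{redef} are precisely the $\gamma$-terms generated here, i.e. that no genuine reparametrization of the $x_i$ or extra ambiguity survives; this follows because the only $L$-dependence was the single boundary, so nothing else can appear.

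The main obstacle I expect is the rigorous justification of the boundary-term computation in step one: taking $L\,\partial_L$ of~\eqref{recurorigpert2} requires controlling the large-$|y-x_n|$ regime of the last integral and commuting the scale derivative past the (only conditionally convergent, $\epsilon\to0^+$) $y$-integral, which in turn leans on the convergence/associativity axioms of Section~\ref{sec:OPE} uniformly in the relevant region. Making this manipulation legitimate order-by-order — i.e. showing the limit $\epsilon\to0^+$ and the derivative $L\,\partial_L$ genuinely commute and that the resulting boundary integral has the claimed closed form in terms of $\C_{\INT C}^B$ — is where the real work lies; everything downstream (identifying $\beta,\gamma$, solving the flow, reading off the support constraint on $Z$) is then essentially the geometric repackaging already set up in Section~\ref{sec:RG}.
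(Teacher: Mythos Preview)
Your approach is genuinely different from the paper's, and it contains a concrete error together with a gap that the paper's argument is designed to fill.

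\textbf{Difference in strategy.} The paper does not differentiate the action principle in $L$. It instead invokes an external result (Proposition~1 of \cite{Holland:2015tia}) which already asserts that the OPE coefficients defined with a given IR scale are equivalent, under a field redefinition and coupling reparametrization, to a fixed reference system. Applying that result once for $L$ and once for $\widehat L$, and composing, yields the desired equivalence. The only additional input in the paper is a dimensional-analysis argument with a small mass regulator: one checks that $Z$ and $f$ are invariant under a simultaneous rescaling of $L,\widehat L,m$, and then takes $m\to 0$ to conclude that they depend only on the ratio $\widehat L/L$, i.e.\ on $t$.

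\textbf{The factual error.} You write that ``the only place where $L$ enters the action principle is as the outer radius of the $y$-integration in the last (`$\INT C$') term.'' This is not correct: in \eqref{recurorigpert2} the upper limit $L$ appears in \emph{every} $y$-integral, including the $(n{+}1)$-point term and each of the ``subtraction'' terms in the second line. Taking $L\,\partial_L$ therefore produces boundary contributions at $|y-x_i|=L$ from all of them, not just from the last one. Your identification of the boundary term as being built solely from $\C_{\INT C}^B$ is thus incomplete.

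\textbf{The missing ingredient.} Even if you track all boundary contributions, you still need to explain why the objects you call $\beta(g)$ and $\gamma_A{}^B(g)$ are functions of $g$ alone and not also of $L$. Nothing in your argument forces this; the boundary integrals at $|y-x_\cdot|=L$ naively carry explicit $L$-dependence. In the paper this is precisely the role of the mass-regulator scaling argument: with a regulating mass $m$, dimensional analysis shows $Z,f$ depend only on the dimensionless combinations, and in the massless limit the only surviving ratio is $\widehat L/L$. Without an analogue of this step, your infinitesimal derivation does not close into an autonomous flow $\partial_t f=\beta(f)$, $\partial_t Z=\Gamma(f)Z$. Relatedly, you treat the OPE coefficients appearing as integrands on the right of \eqref{recurorigpert2} as $L$-independent when you differentiate; they are not, and the extra terms must be absorbed into the inductive scheme in $g$ rather than dropped.
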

\begin{proof}
The proof of this proposition for massless $\phi^4$-theory follows directly from the arguments given 
in sec.~V of~\cite{Holland:2015tia}. Apply proposition~1 of that reference
twice, for $L$ and $\widehat L$. For a small mass $m$, it follows from the formulas given in the paper that $Z, f$ are invariant if we 
rescale simultaneously $L,\widehat L,m$. Therefore, in the limit as $m \to 0$, which is argued in the paper to exist, $Z,f$ can only depend on the 
ratio $L/\widehat L$, hence on $t$, as well as $g$. There is nothing special about the 
argument for this particular theory in the sense that the proof generalizes as long as we only have 
one marginal operator. For several marginal operators one obtains a flow in an $n$-dimensional space of couplings. 
\end{proof}

We stress that  each coefficient $f_r(t)$ in the formal series for $f(g,t)$ has 
a well-defined, finite value. Similar remarks apply to the 
formal series $Z_A^B(g,t)$. Then, by applying the previous proposition twice for $t$ and $t'$ it therefore follows that
the ``cocycle'' identities 
\ben
Z(t+t',g) = Z(t',g(t,g))Z(t,g) \ , \quad 
f(t+t',g) = f(t',f(t,g)) \  
\een
hold in the sense of formal series. It is standard that the cocycle conditions imply the existence of a formal series 
$\beta(g)$ and a matrix-valued formal series $\Gamma_A^B(g)$ in $g$ with vanishing entries for $[A]\neq [B]$ satisfying 
\ben\label{differential}
\partial_t f(t,g) = \beta(f(t,g)) \ , \quad \partial_t Z(t,g) = \Gamma(f(t,g)) Z(t,g) \ . 
\een
For the ${\rm GN}_2$ model, the first terms in the formal series $\beta(g) = \sum \beta_r g^r$ are found to be $\beta_0=\beta_1=0$ and
\ben\label{decay}
 \beta_2 = -(N-1)/\pi, \quad \beta_3 = (N-1)/(2\pi^2)  \ . 
\een  
These results, which we do not derive here, are
equivalent to the usual ``1-loop'' and ``2-loop'' calculations of the beta function in the GN-model.

In a theory with vanishing beta coefficients to all orders, the cocycle relation becomes the 1-parameter group condition
$Z(t+t',g) = Z(t',g)Z(t,g)$.  Thus we can write 
$Z(t,g) = \exp t\Gamma(g)$. In a theory satisfying the OS-positivity condition, one can show that $\Gamma$ is semi-simple, i.e. 
diagonalizable with real eigenvalues (no non-trivial Jordan-blocks). The eigenvalues $\gamma_A(g)$ correspond to the anomalous 
dimensions. More properly, we should say that the true  dimensions are the engineering dimensions\footnote{Engineering dimensions 
are defined only at the Gaussian fixed point (free field). In a scalar field theory in $d=4$, the engineering dimension of $\O_A$ is 
the number of $\phi$-factors plus the number of derivatives.} $[A]$ plus the 
anomalous dimensions, $\Delta_A(g) = [A]+\gamma_A(g)$. 
Indeed, in a basis of fields in which $\Gamma(g)$ is diagonal, our action principle and version of the renormalization group gives
\ben\label{cov}
\C_{A_1 \dots A_n}^B(\lambda x_1, \dots, \lambda x_n) 
= \lambda^{-\Delta_{A_1}-\cdots-\Delta_{A_n}+\Delta_B} \C_{A_1 \dots A_n}^B(x_1, \dots, x_n)
\een
in the sense of formal series.  
This can easily be derived  
using that the action principle is trivially invariant under 
a simultaneous rescaling of $L \to \lambda L$ 
and $x_i \to \lambda x_i$ at each perturbation order, up to the power counting factor $\lambda^{-[A_1]-...-[A_n]+[B]}$.
Thus, the action principle is scale-covariant. Since a multiplicative change of $L$ corresponds to a multiplicative 
change of the fields by $Z(\log \lambda,g) = \lambda^{\Gamma(g)}$ with eigenvalues $\lambda^{\gamma_A(g)}$, the result follows.

The RG flow is also nicely intertwined with the geometry underlying the field redefinitions. Suppose we subject the OPE coefficients to a field redefinition given by a matrix $\zeta^A_B(g)$ and 
the marginal or relevant coupling parameters to the corresponding change $g^a \mapsto \hat g^a(g) \equiv \phi(g)$ induced by the exponential map (see above). Then under this change, the vector field $\beta^a$ and the matrix of anomalous dimensions $\Gamma^A_B$ change according to  
\ben
\Gamma(g) \to \widehat \Gamma(g) = \zeta(g)^{-1} \Gamma(\phi(g)) \zeta(g) -  \zeta(g)^{-1} \widehat \beta \cdot \d \zeta(g) \ ,
\quad \beta(g) \to \widehat \beta(g) =  \phi_* \beta(g)  \ ,  
\een 
where $\phi_*$ denotes the pull-back of a vector field. In perturbation theory, these statements are understood in the sense of formal series in $g$. 
It is a simple exercise in manipulating formal series to show that, if $\beta_2$ is non-vanishing, there 
always exist formal series $\phi$ and $\zeta_A^B$ such that $\widehat \beta_r = 0$ for all $r \ge 4$ and $\widehat \Gamma_r{}^{B}_A = 0$ for all $r \ge 2$. 
This corresponds to the statements in the physics literature that ``only $\beta_2, \beta_3$ and $\Gamma_1$ are universal''.

\subsection{Action principle for CFTs}

\subsubsection{Conformal blocks, correlation functions, and tensor structures}

Conformal field theories are (Euclidean, in this paper) QFTs whose correlation functions 
are invariant not only under the Euclidean group but even under the conformal group, see e.g. \cite{rychkov2} for a recent, hands-on, review. 
More precisely, consider the group of all globally defined orientation preserving conformal transformations $g$ of ${\mathbb R}^d$ (here $d \ge 2$), 
\ben
g^* \delta_{\mu\nu} = \Omega(x)^2 \delta_{\mu\nu}, 
\een
where $\Omega(x)$ is called the conformal factor and $\delta_{\mu\nu}$ is the metric on Euclidean space ${\mathbb R}^d$. 
As is well-known, the conformal group is isomorphic to  a covering group\footnote{Here we will only 
be concerned with a sufficiently small neighbourhood of the identity of this group, so the covering is in this sense unnecessary.} of 
the identity component of ${\rm SO}(d+1,1)$. It obviously contains the (cover of the) Euclidean 
group $\Ed$, and in particular the rotation subgroup $\SOd$ or ${\rm Spin}(d)$, for which the conformal factor is $\Omega=1$. 
The most important new transformations are the dilations $d(\lambda): x \mapsto \lambda x$, which 
have conformal factor $\Omega = \lambda$. 

To make these axioms, one assumes that within the set $\{\O_A\}$ of local fields (or rather, mathematically speaking in the abstract set of labels $A$)
there are distinguished so-called ``primary'' fields. These fields are labelled in the following by a lower case Roman index from the middle of the alphabet, i.e. 
the set of primary fields is denoted by $\{\O_j\}$. Each such primary field may have a tensor- or spinor index structure (subsumed in the index $j$) -- alternatively speaking, 
they take values in some finite-dimensional irreducible representation $D_j$ of $\SOd$, or its
cover ${\rm Spin}(d)$ if we want to treat fermions, on a {\em finite dimensional} real or complex vector space\footnote{Such representations are in correspondence with a Young-tableau indicating the symmetry properties of the tensor, or by a set of spin labels $(s_1, \dots, s_{\lfloor d/2 \rfloor})$.} $V_j$. Furthermore, it is assumed that any other local field $\O_A$ can be obtained 
as  primary field can be obtained by a linear combination of derivatives of primary fields. Such fields are called ``descendants''. 

The covariance axiom under the Euclidean group is now replaced by the following axiom:

\begin{enumerate}
\item[e5')] Correlation functions of primary fields satisfy
\ben
\langle \O_{i_1}(gx_1) \dots  \O_{i_N}(gx_N) \rangle = \prod_{j=1}^N \Omega(x_j)^{\Delta_j} \bigotimes_{j=1}^N D_j(R(x_j,g)) 
\langle \O_{i_1}(x_1) \dots  \O_{i_N}(x_N) \rangle, 
\een
where $gx$ is the usual action -- if defined -- of a group element $g \in {\rm SO}(d+1,1)$ on a point $x\in \mathbb{R}^d$. 
The relation is required to hold for $g$ in a neighborhood $U \subset {\rm SO}(d+1,1)$ of the identity such that for any smooth path $g_t$ in $U$
connecting $g \in U$ with the identity, none of the points $g_t x_i$ goes through infinity. 
Furthermore, $R(x,g)$ is a $\SOd$ element obtained by decomposing the Jacobian of the transformation 
$g$ as $\partial (gx)/\partial x = \Omega(x,g) R(x,g)$, 
and $\Delta_j$ is the dimension of the primary field. Since any other field can be obtained from primary fields 
by applying derivatives, we have a corresponding (more complicated) transformation formula also for the descendant fields. 
\end{enumerate}

As is well known, the axiom gives drastic simplifications of the correlation functions at the 2- and 3-point level. Let us assume to simplify notations that all fields are hermitian (i.e. $i^*=i$). It is a consequence of OS-positivity that, at the 2-point level, we
can choose the primary fields such that
\ben
\langle \O_i(x_1) \O_j(x_2) \rangle = A_i \delta_{ij} t_{ij}(x_1, x_2) 
|x_{12}|^{-2\Delta_i}   , 
\een
where $x_{12} \equiv x_1-x_2$,  where $A_i \in \mathbb C$ and where $t_{ij}$ is an ``invariant tensor structure''\footnote{We could absorb $A_i$ into 
the tensor structure or a field redefinition. However, it is more convenient for us not to do this here, see footnote \ref{footnote14}. If we want, we can always set $A_i$ to $1$ by a field redefinition in the very end.}. In general, an $N$-point invariant tensor structure is 
a map from $M_N$ to the tensor product representation space $\otimes_k V_{i_k}$ (with each $V_{i}$ a complex vector space carrying a {\em finite dimensional, irreducible} representation $D_i$ of $\SOd$) such that for all $g \in  {\rm SO}(d+1,1)$ in some open neighborhood of the identity -- depending on the points $x_i$ as in e5') -- we have
\ben\label{action}
t_{i_1 \dots i_N}(gx_1, \dots, gx_N) =  \bigotimes_{j=1}^N D_j(R(x_j,g)) t_{i_1 \dots i_N}(x_1, \dots, x_N). 
\een
For $N=2$ points, we can construct the tensor structures as follows\footnote{For a more detailed exposition of the following ``colinear frame''-type arguments, see e.g. the recent paper \cite{Kravchuk:2016qvl} and references therein. }. First, we note that 2 points $x_1, x_2$ can be 
transformed by a conformal transformation to 
two fixed points, say $0,e$ with $|e|=1$ on a fixed line in $\RR^d$. The remaining conformal transformations leaving these two points fixed form a 
subgroup of ${\rm SO}(d+1,1)$ isomorphic to\footnote{
At the level of Lie-algebras, if we denote the generators $\frak{so}(d,1)$ by their standard notation $M_{\mu\nu}, P_\mu, K_\mu, D$ and if $e=(1,0,\dots,0)$,
then ${\mathfrak e}(d-1)$ is generated by $M_{ij}, 2K_i - M_{1i}$, where $i=2, \dots, d$. 
} ${\rm E}(d-1) = {\rm SO}(d-1) \rtimes \RR^{d-1}$. Thus, the transformation formula tells us that $t_{ij}(0,e)$ lies in the space 
of invariant tensors under the group action \eqref{action}, namely
\ben
\label{2pt1}
t_{ij}(0,e) \in (V_i \otimes V_j)^{{\rm E}(d-1)}   
\een
where the superscript denotes the set of tensors invariant under $D_i(R(0,g)) \otimes D_j(R(e,g))$ with $g$ ranging over a 
sufficiently small neighborhood of the unit element in the group ${\rm E}(d-1)$.
This space can be seen to be 1-dimensional if $V_i \cong V_j$ as a representation. 
So, in the 2-point function, only one tensor structure can appear for fixed $i$, which in the following 
we fix once and for all\footnote{In particular, when $\O_i$ is scalar, we take $t_{ii}=1$.}.

As an example, consider a totally symmetric tensor operator $\O^{\mu_1 \dots \mu_r}$ of rank $r$, 
i.e. $V_i =V_j={\rm S}^r \RR^d$, which is actually a reducible representation of $\SOd$. 
The elements in the space \eqref{2pt1} are linear combinations of
\ben\label{comb}
\begin{split}
t^{\mu_1 \dots \mu_r,\nu_1 \dots \nu_r}(0,e) =& \sum_{P,Q \in S_r}
(\delta^{\mu_{P(1)}\nu_{Q(1)}} - 2e^{\mu_{P(1)}} e^{\nu_{Q(1)}}) \cdots (\delta^{\mu_{P(p)}\nu_{Q(p)}} - 2e^{\mu_{P(p)}} e^{\nu_{Q(p)}}) \\
&
\times \delta^{\mu_{P(p+1)} \mu_{P(p+2)}} \cdots \delta^{\mu_{P(r-1)} \mu_{P(r)}}
\delta^{\nu_{Q(p+1)} \nu_{Q(p+2)}} \cdots \delta^{\nu_{Q(r-1)} \nu_{Q(r)}}, 
\end{split}
\een
for $1 \le p \le r$, and the form for general $x_1 \neq x_2$ is then found using the transformation law \eqref{action}. An irreducible $V_i$ 
is obtained by considering only tensors in ${\rm S}^r \RR^d$ which are trace-free in any pair of indices from either 
$\{\mu_1, \dots, \mu_r\}$ or $\{\nu_1, \dots, \nu_r\}$. The corresponding 2-point tensor structure is then 
precisely the up to scaling unique totally trace-free linear combination of \eqref{comb}. The 2-point structure $t^{\mu_1 \dots \mu_r,\nu_1 \dots \nu_r}(x_1,x_2)$
is obtained via \eqref{action} choosing any group element $g \in {\rm SO}(d+1,1)$ such that $g(0)=x_1, g(e)=x_2$. 

For $r=2$ (which in the case $\Delta_i=d$ leads to the 2-point function of the stress tensor), this gives for instance
\ben
t^{\mu_1 \mu_2, \nu_1 \nu_2}(x_1,x_2) = \left( \delta^{\mu_1(\nu_{1}} - 2\frac{x^{\mu_1}_{12} x^{(\nu_{1}}_{12}}{|x_{12}|^2} \right)
\left( \delta^{\nu_2)\mu_{2}} - 2\frac{x^{\nu_2)}_{12} x^{\mu_{2}}_{12}}{|x_{12}|^2} \right) - \frac{1}{d} \delta^{\mu_1 \mu_2} \delta^{\nu_1 \nu_2}, 
\een
in agreement with standard formulas.

Similarly, for $N=3$ points, we can transform an arbitrary configuration $x_1, x_2, x_3$ to three distinguished points on a line, say
$0,e,\infty$, where $\infty$ denotes the point at infinity on the compactified line $\RR e$. The fixed point group consists of those 
conformal transformations in the subgroup $\SOdm$ leaving the line $e\RR$ fixed. Therefore, we similarly have
\ben
t_{ijk}(0,e,\infty) \in (V_i \otimes V_j \otimes V_k)^{\SOdm}. 
\een
For $N=4$ points, we can transform an arbitrary configuration $x_1, x_2, x_3, x_4$ to four distinguished points on a fixed plane line, say
$0,e,\infty, x$, where $x$ denotes the point in some fixed plane containing $e$. Such a plane is left invariant by a corresponding subgroup 
$\SOdmm$. Therefore, we similarly have
\ben\label{trafomm}
t_{ijkl}(x,0,e,\infty) \in (V_i \otimes V_j \otimes V_k \otimes V_l)^{\SOdmm}. 
\een
Thus, an invariant 4-point tensor structure gives rise to a function of $x$ valued in the invariant tensors. In fact, since we need precisely two 
real numbers to say where we are on our real plane, 
we may say that an invariant 4-point tensor structure gives rise to a function from a distinguished 2-plane in $\RR^d$ to in the invariant tensors. The converse to these statements also holds true, i.e. we may reconstruct invariant tensor structures for 2, 3 or 4 points by the invariants described above. 

The 3-point function of primary fields can be written in terms of such 3-point structures, namely 
 \ben\label{3pt}
 \begin{split}
\langle \O_i(x_1) \O_j(x_2) \O_k(x_3) \rangle = &
|x_{12}|^{-\Delta_i-\Delta_j+\Delta_k} 
|x_{23}|^{-\Delta_j-\Delta_k+\Delta_i}
|x_{13}|^{-\Delta_k-\Delta_i+\Delta_j}\\
&\times \sum_\alpha \lambda^\alpha_{ijk} \ t_{ijk}^\alpha(x_1, x_2, x_2) , 
\end{split}
\een
where $\lambda^\alpha_{ijk} \in \CC$ are called the structure constants, and where $t^\alpha_{ijk}$ is some chosen basis of 3-point tensor structures
from $(V_i \otimes V_j \otimes V_k)^{\SOdm}$. 
Thus, up to 3 points, all we need to know are the dimensions $\Delta_i \in \RR_+$ and the structure constants $\lambda^\alpha_{ijk} \in \CC$. In fact, 
in view of the OPE axiom, all correlation functions are determined by these ``conformal data''. 

Using the form of the 3-point function, one finds that OPE of two primary fields also has a particularly simple form in any CFT. 
It is given by 
\ben
\O_i(x_1) \O_j(x_2) = \sum_k \sum_\alpha \lambda^\alpha_{ijk} \cP^\alpha_{ijk}(x_{12}, \partial_2) \ \O_k(x_2) , 
\een
where the tensor operators $\cP^\alpha_{ijk}(x_{12}, \partial_2)$ are fixed pseudo-differential operators (depending only on the 
tensor structures associated with $i,j,k$ and $\alpha$) that can be found explicitly in principle. When formally expanded out in 
$\partial_2$, they generate an infinite series of contributions from descendant fields. The main difference of the OPE in CFTs compared to the 
case of general QFTs as described above is thus that the OPE coefficients for primaries and descendants are not independent (we also get 
the OPE of two arbitrary descendants). Applying two successive OPEs $(ij)(kl)$ to the 4-point function gives 
\ben\label{4pt}
\langle \O_i(x_1) \O_j(x_2) \O_k(x_3) \O_l(x_4) \rangle = \sum_m \sum_{\alpha,\beta} \lambda_{ijm}^\alpha \lambda_{klm}^\beta A_m^{}
\ \cG_{ijkl,m}^{\alpha\beta}(x_1, x_2, x_3, x_4) , 
\een
(at least) in the domain $\{|x_{12}|, |x_{34}|<|x_{24}|\}$, where the ``conformal blocks'' are defined as
\ben
\cG_{ijkl,m}^{\alpha\beta}(x_1, x_2, x_3, x_4) \equiv \cP^\alpha_{ijm}(x_{12}, \partial_2)  \cP^\beta_{klm}(x_{34}, \partial_4) \langle \O_m^{}(x_2) \O_{m}^{}(x_4) \rangle/A_m. 
\een
These expressions are in principle completely fixed by the representation theory of the conformal group. It is customary to factor out the trivial dependencies 
of the blocks on the points implied by conformal invariance and factor out the tensor 4-point structures. This amounts to writing
\ben
\begin{split}
\cG_{ijkl,m}^{\alpha\beta}(x_1, x_2, x_3, x_4) =& 
\left( \frac{|x_{14}|}{|x_{24}|} \right)^{\Delta_j - \Delta_i}
\left( \frac{|x_{14}|}{|x_{13}|} \right)^{\Delta_k - \Delta_l}
\frac{1}{|x_{12}|^{\Delta_i+\Delta_j} |x_{34}|^{\Delta_k+\Delta_l}}\\
&\times
\sum_\gamma t^\gamma_{ijkl}(x_1, x_2, x_3, x_4) \ \cG_{ijkl,m}^{\alpha\beta\gamma}(u,v) . 
\end{split}
\een
Here $u,v$ are the anharmonic ratios, 
\ben
u= \frac{|x_{12}|^2 |x_{34}|^2}{|x_{13}|^2 |x_{24}|^2}  , \quad
v= \frac{|x_{14}|^2 |x_{23}|^2}{|x_{13}|^2 |x_{24}|^2} , 
\een
and the quantities $\cG_{ijkl,m}^{\alpha\beta\gamma}(u,v)$ are sometimes referred to as the ``spinning blocks''. They are, too, in principle completely determined by 
the representation theory of the conformal group once we give the tensor structures and dimensions associated with $i,j,k,l,m$ and once we fix bases of 3- and 4-point tensor structures $t^\alpha_{ijm}, t^\beta_{klm}, t^\gamma_{ijkl}$. But in practice their determination for general $d$ is a very complicated problem which we do not address here. For the sake of concreteness, let us just quote the well-known result for $d=2$:

\medskip
\noindent
{\bf Example:} In $d=2$, the conformal group is ${\rm SO}(3,1)$. The spin corresponds to a representation of ${\rm SO}(2)$ (for bosons). The representation
spaces $V_j$ are all one-dimensional and the representations are labelled by an integer 
$s \in \ZZ$ (for fermions, $s \in \ZZ/2$). It is convenient to define $h, \bar h$ through the relations $h + \bh = \Delta, h-\bh = s$, where the overbar in $\bh$ does not mean any kind of conjugation. By the general arguments above, there is only one independent invariant 3-point and 4-point tensor structure, so the labels $\alpha, \beta, \dots$ are not needed. Identifying points $x_i \in \RR^2$ with complex numbers and using the notation $h_{ij}=h_i-h_j$ etc., we may write the 4-point tensor structure as
\ben
t_{ijkl}(x_1, x_2, x_3, x_4) =  
\left( \frac{x_{24} \bar x_{14}}{x_{14} \bar x_{24}} \right)^{s_{i}/2-s_j/2}
 \left( \frac{x_{14} \bar x_{13}}{x_{13} \bar x_{14}} \right)^{s_{k}/2-s_l/2}
 \left( \frac{\bar x_{12}}{x_{12}} \right)^{s_{i}/2+s_j/2}
 \left( \frac{\bar x_{34}}{x_{34}} \right)^{s_{k}/2+s_l/2}
\een
and we may write the spinning conformal block as \cite{osborn} 
\ben
\cG_{ijkl,m}(u,v)=
\, {}_2 F_1(h_m-h_{ij}, h_m + h_{kl}, 2h_m; z) \cdot \, {}_2 F_1 (\bh_m-\bh_{ij}, \bh_m + \bh_{kl}, 2\bh_m; \bar z)
z^{h_m} \bar z^{\bh_m}
\een\label{Fdef}
Here, $z \in \CC$ is defined implicitly by $u=\bar z z, v = (1-\bar z)(1-z)$.

\subsubsection{Action principle}

Equipped with this information, we now 
want to investigate what form our action principle takes in the context of CFTs. In general, we must assume that $\INT$ is some operator of dimension
$\Delta_\INT\le d$. However, even if our starting point $g=0$ corresponds to a conformally invariant theory, there is no reason that the flow determined by the action principle will remain conformal. This obviously is not the case when $\Delta_\INT < d$, i.e. for relevant flows. For marginal flows, $\Delta_\INT = d$, this is not necessarily the case either, because the interaction can break conformal symmetry (as for the GN$_2$-model or the $\phi^4_4$-model) in the sense that 
$\Delta_\INT$ will start differing from $d$ as we flow. Thus, we must assume a {\it strictly marginal} perturbation which by definition corresponds to a 1-parameter family of CFTs parameterized by $g$. Such a family is characterized by a 1-parameter
family of conformal data $\{\Delta_i^{}(g), \lambda^\alpha_{ijk}(g)\}$. 
It is clear that our action principle will give an ordinary differential equation for the 1-parameter family of conformal data when applied to $N=3$ points and three primary operators with labels $A=i, B=j, C=k$. 

We now would like to determine what this differential equation is. First, we would like to write down the 3-point OPE coefficients $\C_{ijk}^l$ associated with four primaries. From the OPE $(ijk)l$ in a 4-point function and the fact that the 2-point functions of descendants fall off faster than those of their corresponding primaries, 
one finds
\ben
\lim_{x_4 \to \infty} |x_4|^{2\Delta_l} \langle \O_i(x_1) \O_j(x_2) \O_k(x_3) \O_{l}(x_4) \rangle/A_l = \C_{ijk}^l(x_1, x_2, x_3) t_{ll}^{}(x_3, \infty), 
\een
(no summation over $l$) where $t_{ij}$ is the tensor structure appearing in the 2-point function $\langle \O_i(x_1) \O_j(x_2) \rangle$. Since this tensor structure is invertible, which follows from OS-positivity, we can obviously reconstruct $\C_{ijk}^l$ from the 4-point function which in turn can be reconstructed from the conformal data and spinning conformal blocks. Similarly, we have 
\ben
\lim_{x_3 \to \infty} |x_3|^{2\Delta_l} \langle \O_i(x_1) \O_j(x_2) \O_{l}(x_3) \rangle/A_l = \C_{ij}^l(x_1, x_2) t_{ll}^{}(x_2, \infty), 
\een
so $\C_{ij}^l$ again can be reconstructed from the conformal data. 

We need to write down the 2-point and 3-point structure constants in some more detail. For simplicity, we assume for the moment that 
a normalization (field redefinition) of the operators has been chosen in which $A_i=1$ -- these factors can easily be reinstated and will be in the final formulas. 
For the 2-point OPE coefficient we then get putting $x_1=x, x_2=0$:
\ben\label{2pt}
\C_{ij}^l(x, 0) t_{ll}^{}(0, \infty) = \sum_\alpha \frac{t^\alpha_{ijl}(x, 0, \infty) \lambda^\alpha_{ijl}}{|x|^{\Delta_i+\Delta_j-\Delta_l}} .
\een
For the 3-point OPE coefficient, we get the relevant information from the 4-point function, but the formulas are a bit more complicated, because 
the expansion formulas for the 4-point function in terms of spinning conformal blocks do not converge everywhere.  
Putting $x_1 = x, x_2 = 0, x_3 = e$, we are going find separate expressions for $\C_{ijk}^l(x,0,e)$ in the following domains 
\begin{enumerate}
\item[a)] $\{|x_{12}|< \min(|x_{13}|,|x_{23}|) \}$, equivalent to $\{|x|< \min(|x-e|,1) \}$.
\item[b)] $\{|x_{23}|< \min(|x_{12}|,|x_{13}|) \}$, equivalent to $\{1< \min(|x-e|,|x|) \}$.
\item[c)] $\{|x_{13}|< \min(|x_{23}|,|x_{12}|) \}$, equivalent to $\{|x-e|< \min(|x|,1) \}$.
\end{enumerate}

\medskip
\noindent
a) In this domain \eqref{4pt} is valid, so we can write 
\ben
\begin{split}
\C_{ijk}^l(x,0,e)  t_{ll}^{}(0, \infty) = &\sum_m \sum_{\alpha, \beta, \gamma} \frac{1}{|x|^{\Delta_i+\Delta_j} |x-e|^{\Delta_k - \Delta_l}} \
 \cG_{ijkl,m}^{\alpha\beta\gamma}\left( \frac{|x|^2}{|x-e|^2},  \frac{1}{|x-e|^2}\right) \\
& \times t^\gamma_{ijkl}(x,0,e,\infty)  .
\end{split}
\een

\medskip
\noindent
b) In this domain \eqref{4pt} is valid after permuting $(ijk) \to (jki)$ as well as $(123) \to (231)$, so we can write (assuming 
for simplicity that all primary fields are bosonic)
\ben
\begin{split}
\C_{ijk}^l(x,0,e)  t_{ll}^{}(0, \infty) = &\sum_m \sum_{\alpha, \beta, \gamma} \frac{1}{|x|^{\Delta_i-\Delta_l}} \
 \cG_{jkil,m}^{\alpha\beta\gamma}\left( \frac{1}{|x|^2},  \frac{|x-e|^2}{|x|^2}\right) \\
& \times t^\gamma_{jkil}(0,e, x,\infty) \ \lambda_{jkm}^\alpha \lambda_{ilm}^\beta .
\end{split}
\een

\medskip
\noindent
c) In this domain \eqref{4pt} is valid after permuting $(ijk) \to (ikj)$ as well as $(123) \to (132)$, so we can write (assuming 
for simplicity that all primary fields are bosonic)
\ben
\begin{split}
\C_{ijk}^l(x,0,e)  t_{ll}^{}(0, \infty) = &\sum_m \sum_{\alpha, \beta, \gamma} \frac{1}{|x|^{\Delta_j-\Delta_l} |x-e|^{\Delta_i - \Delta_k}} \
 \cG_{ikjl,m}^{\alpha\beta\gamma}\left( \frac{|x-e|^2}{|x|^2},  \frac{1}{|x|^2}\right) \\
& \times t^\gamma_{ikjl}(x,e,0,\infty) \ \lambda_{ikm}^\alpha \lambda_{jlm}^\beta .
\end{split}
\een 

Now let $\O_i=\INT$ be our exactly marginal perturbation (a primary field of dimension $d$), and let $\epsilon>0$ be small. We define the following complex numbers:

\medskip
\noindent
a) For domain a), we set
\ben
\begin{split}
{}^a \T_{jklm}^{\alpha\beta\mu} =& {\rm P.F.} \sum_{\gamma} \int_{\epsilon < |x|< \min(|x-e|,1)} \d^d x \ \frac{1}{|x|^{d+\Delta_j} |x-e|^{\Delta_k - \Delta_l}} \\
 & \cG_{\INT jkl,m}^{\alpha\beta\gamma}\left( \frac{|x|^2}{|x-e|^2},  \frac{1}{|x-e|^2}\right) \ t^\mu_{jkl}(0,e) \cdot  t^\gamma_{\INT jkl}(x,0,e,\infty)
 \end{split}
\een
where the dot $\cdot$ is the natural hermitian inner product between tensors in $V_j \otimes V_k \otimes V_l$ and $\rm P.F.$ denotes the 
finite part of an asymptotic expansion\footnote{
More precisely, let $f(\epsilon)$ be a function having an asymptotic expansion of 
the form $\sum_{a_i < N} \sum_{b_j > M} A_{i,j} ( \log \epsilon )^{a_i} \epsilon^{b_j}$ for small 
positive $\epsilon$. Then F.P. means that we subtract all of the finitely many divergent terms in this expansion.
Note that the right side indeed has such an expansion (with possibly non-integer powers) due 
to the OPE.} for small $\epsilon$.   

\medskip
\noindent
b) For domain b), we set
\ben
\begin{split}
{}^b \T_{jklm}^{\alpha\beta\mu} =& {\rm P.F.} \sum_{\gamma} \int_{1< \min(|x-e|,|x|) < 1/\epsilon} \d^d x \ \frac{1}{|x|^{d-\Delta_l}}  \\
 &  \cG_{jk\INT l,m}^{\alpha\beta\gamma}\left( \frac{1}{|x|^2},  \frac{|x-e|^2}{|x|^2}\right)   \  t^\mu_{jkl}(0,e) \cdot t^\gamma_{jk\INT l}(0,e, x,\infty). 
 \end{split}
\een

\medskip
\noindent
c) For domain c), we set
\ben
\begin{split}
{}^c \T_{jklm}^{\alpha\beta\mu} =& {\rm P.F.} \sum_{\gamma} \int_{\epsilon < |x-e|< \min(|x|,1)} \d^d x \ \frac{1}{|x|^{\Delta_j-\Delta_l} |x-e|^{d - \Delta_k}} \\
 & \cG_{\INT kjl,m}^{\alpha\beta\gamma}\left( \frac{|x-e|^2}{|x|^2},  \frac{1}{|x|^2}\right) \ t^\mu_{jkl}(0,e) \cdot t^\gamma_{\INT jkl}(x,0,e,\infty)
 \end{split}
\een
Note that the complex numbers ${}^a \T_{jklm}^{\alpha\beta\mu}, {}^b \T_{jklm}^{\alpha\beta\mu}, {}^c \T_{jklm}^{\alpha\beta\mu}$ are in principle 
determined by the representation theory of the conformal group alone, i.e. the spinning blocks and invariant tensor structures, as well as the dimensions of the primary fields. Further note that if we parameterize 
$x=(\Re(z), \Im(z) {\bf z})$ with $z \in \CC, {\bf z} \in S^{d-2}$, then the integrand depends in each of the cases a), b), c)
only on $z,\bar z$ (noting that $|x|^2 = z \bar z, |x-e|^2 = (1-z)(1- \bar z)$), due to the invariance properties of our tensor structures \eqref{trafomm}. 
Then the integration turns into an integration over the corresponding subset of 
$\CC$, and we can effectively write $$\d^d x = (2i)^{-d+2} {\rm vol}(S^{d-2}) (z-\bar z)^{d-2} \d^2 z.$$ Furthermore, the integration domain is
transformed in each case to the ``fundamental\footnote{
It is a fundamental for the action of the permutation group $S_3$ acting by the conformal transformations $1/z, (z-1)/z, z/(z-1)$ which permute the points
$0,1,\infty$.
} domain''
\ben
\F = \{z \in \CC \mid \epsilon < |z| < \min(|z-1|,1) \}
\een
after the change of integration variables $z \to z/(z-1)$ in case a), $z \to 1/z$ in case b) and $z \to (z-1)/z$ in case c). The dot products involving the tensor 
structures can in the variables $z,\bar z$ be written as sums of terms of the form $z^a \bar z^{\bar a}(1-z)^b(1-\bar z)^{\bar b}$  where 
$a - \bar a \in \ZZ, a+\bar a \in \RR$, so at the end of the day we need to perform in each case a), b), c) integrals of the form 
\ben
\T \sim \text{sum of terms of form} \quad {\rm P.F.} \int_\F z^a \bar z^{\bar a}(1-z)^b(1-\bar z)^{\bar b} \cG(z, \bar z) \d^2 z, 
\een
where $\cG$ is some spinning conformal block viewed as a function of the new variables $z, \bar z$, 
and $a,b,\bar a, \bar b$ depend on the dimensions, tensor structures, and the spacetime dimension $d \ge 2$ (in a different way in each of 
the cases a), b), c)).

Now we consider the action principle \eqref{recurorigpert2} for general $A_1, A_2, C$ in CFT, under the assumption that there is only one exactly marginal operator (generating a 
flow of CFTs). Note that, even if $A_1, A_2, C$ correspond to primary fields, the sums in the formula involve primary and descendant fields. 
As we have already said, the role of the terms in the second line of \eqref{recurorigpert2}
is to remove any UV-divergences from the integral in the first line i.e. the contributions that would diverge as $\epsilon \to 0$. 
The terms in the last line are automatically finite as long as $L$ remains finite.
According to our discussion of the renormalization group, we can accommodate a change in  $L$  
by appropriate (diverging, as $L \to \infty $) field redefinitions. 
It is convenient to take $L=1/\epsilon$, because then we can write
\ben\label{recurorigpert5}
\begin{split}
& \partial_g \C_{AB}^C(x_1,x_2) =    {\rm P.F.} \int\limits_{\epsilon<|x_1-y|,|x_2-y| \leq 1/\epsilon}
\d^{d} y\,  \C_{\INT AB}^C(y,x_1,x_2)\\
&\qquad -   \sum_{\Delta_D= \Delta_A} \A_{A}^D  \C_{DB}^C(x_1,x_2)
-  \sum_{\Delta_D = \Delta_B} \A_{B}^D   \C_{AD}^C(x_1,x_2)
 - \sum_{\Delta_D = \Delta_C}  \A_{D}^C \C_{AB}^D(x_1,x_2)   
 \, ,
\end{split}
\een
where P.F. denotes the finite part as $\epsilon \to 0$, whereas $\A_A^B$ are certain (finite) complex constants. These constants arise from the 
RG-flow (since we are varying $L=1/\epsilon$), as well as from the terms in the second and third line of the action principle \eqref{recurorigpert2}.  
It follows in particular that the {\em same} finite constants appear in all three terms. 
The fact that only terms with equal dimension appear in the summations can be seen from dilation covariance since both the finite part P.F. as well as the left side must satisfy the dilation covariance condition \eqref{cov}. 
It follows that the matrices of constants $\A_A^B$ may be absorbed in a further finite field redefinition and a redefinition of the coupling constant, as described in \eqref{redef}. So we learn that, with these implicit field redefinitions, 
\ben\label{recurorigpert6}
\partial_g \C_{AB}^C(x_1,x_2) =    {\rm P.F.} \int\limits_{\epsilon<|x_1-y|,|x_2-y| \leq 1/\epsilon}
\d^{d} y\,  \C_{\INT AB}^C(y,x_1,x_2) \ . 
\een
We now consider the action principle \eqref{recurorigpert6} for $A_1=A_2=i$, with $i$ a label of a primary field, and $C=1$.  
At this stage, we put $x_1 = x, x_2 = 0$, we use that 
$\C_{ij...k}^1 = \langle \O_i \O_j \cdots \O_k \rangle$, we use the explicit form of the 3- and 2-point functions in the CFT in terms 
of the conformal data, and we may use \eqref{2pt}. The resulting explicit integrals can be carried out easily when 
$\O_i$ is scalar normalized so that $t_{ii}=1$ in the 2-point function.
The result is
\ben\label{prev}
\begin{split}
|x|^{-2\Delta_i} \left( 2A_i \log(|x|) \, \frac{\d \Delta_i}{\d g}  + \frac{\d A_i}{\d g} \right) &=  |x|^{-2\Delta_i+d} \ \lambda_{\INT ii} \ 
 {\rm P.F.} \int\limits_{\epsilon < |y|, |x-y| < 1/\epsilon} \frac{\d^d y \, }{
 |x-y|^{d} |y|^{d}}\\
&= \left( \frac{4\pi^{d/2}}{\Gamma(d/2)} \log(|x|) + C \right) |x|^{-2\Delta_i} \, \lambda_{\INT ii}  
\end{split}
\een
where $C$ is a constant\footnote{
\label{footnote14}
The presence of this constant forbids us to set $\d A_i/\d g=0$. Of course, we 
could achieve this by a further field redefinition if we wanted to, but this would change eq. \eqref{recurorigpert6}.} depending only on $d$. 
We conclude that, if $\O_i$ is scalar, then
$$
A_i \frac{\d}{\d g}  \Delta_i = \frac{2\pi^{d/2}}{\Gamma(d/2)} \lambda_{\INT ii} . 
$$
In the general case when $i$ is in a non-trivial representation of the group $\SOd$ the integrals can still be carried out explicitly in principle, but we must take
into account the tensor structures that might appear in the 3-point OPE coefficients $\C_{\INT ii}^1$. The result is now  of the form
\ben\label{ev1}
A_i \frac{\d}{\d g} \Delta_i=  \sum_\alpha \D^\alpha_i \lambda_{\INT ii}^\alpha
\een  
where $\D^\alpha_i$ are the constants that come up in these tensor integrals.

We now choose $A=i, B=j, C=l$, with $i,j,l$ labels of primary fields, and $(x_1, x_2) = (0,e)$. We 
split the remaining integral (integration variable now called $x$) on the right side into the 
regions a), b), c) described before, and use in each region the formulas for the 3-point OPE coefficient
given under a), b), c). Finally, we use the definition of the constants ${}^a \T_{jklm}^{\alpha\beta\mu}, {}^b \T_{jklm}^{\alpha\beta\mu}, {}^c \T_{jklm}^{\alpha\beta\mu}$. On the left side, we use our expression for the 2-point OPE coefficient $\C_{ij}^l(0,e)$. Taking the tensor structures $t^\mu_{jkl}(0,e,\infty)$ to form an orthogonal system in $(V_i \otimes V_j \otimes V_k)^{\SOdm}$ under the natural hermitian inner product on this finite dimensional space denoted by a dot above, and using that the tensor structure 
$t_{ll}(0,\infty)$ appearing in the 2-point function is invertible (by OS-positivity) there results the equation\footnote{
We note in passing that the coefficients $\T$ with arbitrary indices can be related to $6j$-symbols\footnote{$6j$-symbols have recently surfaced in the context of higher dimensional CFTs also in \cite{gadde}.} of the conformal group
$\SOdm$ for the six representations $(V_\INT=\RR, \Delta_\INT=2), (V_i, \Delta_i),$ $(V_j, \Delta_j), (V_k, \Delta_k), (V_l, \Delta_l), (V_m, \Delta_m)$.
Here $(V,\Delta)$ is the (infinite-dimensional) representation of $\SOdm$ described by the $\lfloor d/2 \rfloor$ 
spins encoded in the finite-dimensional representation 
$V$ of $\SOd$ and the dimension $\Delta$.  We shall come back to this in another paper. 
}:
\ben\label{ev2}
\frac{\d}{\d g} \lambda^\mu_{jkl} = \sum_m \sum_{\alpha\beta} \left(
{}^a \T_{jklm}^{\alpha\beta\mu}  \ \lambda_{\INT jm}^\alpha \lambda_{klm}^\beta +
{}^b \T_{jklm}^{\alpha\beta\mu} \ \lambda_{jkm}^\alpha \lambda_{\INT lm}^\beta +
 {}^c \T_{jklm}^{\alpha\beta\mu} \ \lambda_{\INT km}^\alpha \lambda_{jlm}^\beta
 \right) A_m
\een
which is the formula claimed in the introduction.

Consistency with our normalization conditions on the 1-point, 2-point and 3-point function in CFT 
requires that if $i=j$ and $k=1$, then we should have $A_i = \lambda_{ii1}$ which gives the missing evolution equation for $A_i$
as this special case of \eqref{ev2}. Consistency also requires that, if $i \neq j$ and $k=1$, then we should have $\lambda_{ijk}^\alpha=0$ along the flow, and hence 
$\d/\d g \ \lambda_{ijk}^\alpha=0$. This should actually follow from our flow equation and 
is most easily seen for scalar operators $i,j$ (so that there is no need for a tensor structure label $\alpha$) going back to \eqref{recurorigpert6} 
with $A=i, B=j, C=1$. So we only need to check in view of $\C_{ij\INT}^1(x_1,x_2,x_3) = \langle \O_i(x_1) \O_j(x_2) \INT(x_3) \rangle$
and of \eqref{3pt} that  
\ben\label{integral}
{\rm P.F.} \int\limits_{\epsilon < |x_1-y|, |x_2-y| < 1/\epsilon} \frac{\d^d y \, }{
 |x_1-y|^{-\Delta_j+\Delta_i+d} |x_2-y|^{-\Delta_i+\Delta_j+d}}=0
\een 
which indeed holds as long as $x_1-x_2 \neq 0$ and as long as 
\ben\label{non-deg}
\Delta_i - \Delta_j \notin {\mathbb Z}, 
\een
i.e. if a certain non-degeneracy condition is fulfilled by the spectrum of dimensions. This condition is generically only violated if $i$ corresponds to the identity operator $1$ (having dimension 0) and $j$ corresponds to $\INT$ (having by construction dimension $d$). However, in that case \eqref{integral} automatically vanishes anyway. 
The conceptual reason for the vanishing of the integral \eqref{integral} is of group theoretical nature as one may expect. 
The best way to see this is to analytically continue 
the integrand to imaginary $\nu=\Delta_i-\Delta_j$. Then the integral, viewed as a function of $x_1,x_2$, can be seen as a sesquilinear 
bilinear form on the representation space of a spin-0, principal series representation of ${\rm SO}(d+1,1)$ labelled by $\nu$. 
The only such form must be the scalar product itself, 
which corresponds to a delta distribution $\delta^d(x_1,x_2)$ in the principal series, 
and hence vanishes for $x_1 \neq x_2$. By analyticity, this must remain true as long as $\nu$ does not correspond to poles of \eqref{integral}, i.e. as long as the non-degeneracy conditions remains true. 
This line of reasoning also immediately 
shows that the corresponding statement is still true if $i,j$ are not scalar but carry spin. 

Perhaps a better way to state these results is that the field definition required to set ${\mathcal A}=0$ in \eqref{recurorigpert5}
is precisely that which respects our normalization condition of the 1,2,3-point functions in the presence of a non-degeneracy condition. In the absence of such a condition, suitable field redefinition have to be applied, which unfortunately complicates matters considerably. 

If we set $i=j=k=\INT$ in our evolution equations \eqref{ev1}, \eqref{ev2}, 
then we find $\lambda_{\INT\INT\INT}=0$ from the first evolution equation, since we must have $\Delta_\INT=d$ along our flow for
an exactly marginal perturbation. Our second evolution equation then gives a constraint the coefficients $\lambda_{\INT\INT m}$, and by differentiating this constraint 
w.r.t. $g$, further constraints follow which somehow encode that the theory actually {\em has} any exactly marginal deformations. Such constraints were also observed 
previously in \cite{bash}.  

\medskip
\noindent
{\bf Example:} In $d=2$, the conformal group is ${\rm SO}(3,1)$, and we can identify points $x \in \RR^2$ with complex numbers. 
The tensor structures are unique, so the labels $\alpha, \beta, ...$ are superfluous, and the 
equations become somewhat simpler: 
\bena
\frac{\d}{\d g} \lambda_{jkl} &=& \sum_m  \left(
{}^a \T_{jklm}  \ \lambda_{\INT jm} \lambda_{klm} +
{}^b \T_{jklm} \ \lambda_{jkm} \lambda_{\INT lm} +
 {}^c \T_{jklm} \ \lambda_{\INT km} \lambda_{jlm}
 \right) A_m\\
 A_i \frac{\d}{\d g} \Delta_i &=&  2\pi \lambda_{\INT ii}.
\eena
There remains, however, the problem of determining the coefficients $\T$. These were concretely defined in a), b), c) above, so in principle we can find them from the spinning blocks displayed in the previous section. For domain a), we find, for instance, after a change of variables $z=x/(x-1)$:
\ben
\begin{split}
{}^a \T_{jklm} =& 
{\rm P.F.} \int_{\F} \d^2 z \ 
\frac{(z-1)^{h_{kl}-2}}{z^{1-h_j-h_m}}
\frac{(\bar z-1)^{\bh_{kl}-2}}{\bar z^{1-\bh_{j}-\bh_m}} \times \\
& {}_2 F_1(h_m + h_j-1, h_m+h_{kl}, 2h_m; z) \ {}_2 F_1(\bh_m + \bh_j-1, \bh_m+\bh_{kl}, 2\bh_m; \bar z)
\end{split}
\een
where  P.F. denotes the finite part as $\epsilon \to 0$, and where the $\epsilon$-dependent integration domain is
$\F = \{z \in \CC \mid \epsilon < |z| < \min(|z-1|,1) \}$. 
For domain b), we find after a change of variables $z=1/x$:
\ben
\begin{split}
{}^b \T_{jklm} =& 
{\rm P.F.} \int_{\F} \d^2 z \ 
\frac{(z-1)^{-1-h_{k}}}{z^{h_{jk}+h_l+h_m+1}}
\frac{(\bar z-1)^{-1-\bh_{k}}}{\bar z^{\bh_{jk}+\bh_l+\bh_m+1}} \times \\
& {}_2 F_1(h_m + 1 - h_k, h_m + h_{jl}, 2h_m, z) \
 {}_2 F_1(\bh_m + 1 - \bh_k, \bh_m + \bh_{jl}, 2\bh_m, \bar z)
\end{split}
\een
For domain c), we find after a change of variables $z=(x-1)/x$:
\ben
\begin{split}
{}^c \T_{jklm} =& 
{\rm P.F.} \int_{\F} \d^2 z \ 
\frac{(z-1)^{-1+h_j+h_l+h_k}}{z^{1+h_k-h_m}}
\frac{(\bar z-1)^{-1+\bh_j+\bh_l+\bh_k}}{\bar z^{1+\bh_k-\bh_m}} \times \\
& {}_2 F_1(h_m - 1 + h_k, h_m + h_{jl}, 2 h_m, z) \ {}_2 F_1(\bh_m - 1 + \bh_k, \bh_m + \bh_{jl}, 2 \bh_m, \bar z)
\end{split}
\een
The remaining integrals can 
be evaluated for instance using the Mellin-Barnes representation of the hypergeometric function. Using such a representation, 
we are lead to integrals of the form $\int_\F z^a \bar z^{\bar a}(1-z)^b(1-\bar z)^{\bar b} \ \d^2 z$
or rather, their finite part as $\epsilon \to 0$.  
The final result is expressed in terms of a generalized hypergeometric series, which we will explore elsewhere. 
In $d=1$ (``conformal quantum mechanics'' which is formally outside the scope of our framework), 
the integrals  reduce to one-dimensional ones and the formulas become much simpler. In that case, an explicit expression 
has independently been obtained by \cite{connor}. 

\section{Outlook}

The most interesting application of our dynamical equations \eqref{ev1}, \eqref{ev2} for the CFT data of exactly marginal flows would be to 
the $\mathcal N$=2,4 super conformal Yang-Mills theories in $d=4$. Such theories are Gaussian for $g=0$ and 
so trivially soluble, providing thus in principle an initial condition for the dynamical system. Given sufficient information about the coefficients $\T$, which is in principle only kinematical input, might enable one to show by Newton iteration that the dynamical system has a non-perturbative solution for finite $g$, 
thus leading to a mathematical existence result for these theories without any kind or large $N$ limit! 

Unfortunately, as stated our results do not quite apply to this situation, 
because at the Gaussian point the non-degeneracy condition \eqref{non-deg} $\Delta_i - \Delta_j \notin \ZZ$ for the spectrum 
is certainly very far from being fulfilled, as is even the case 
away from the Gaussian point due to primaries with protected dimension. In the absence of the non-degeneracy condition, our evolution equations for the conformal data have to be modified by finite field redefinitions when a ``crossover'' $\Delta_i - \Delta_j \in \ZZ$ occurs as we have explained. This appears to be a non-trivial problem\footnote{I am grateful to S. Rychkov for emphasizing this point to me.}.

An independent technical complication, which to some extent has been addressed by  \cite{Costa:2011dw,Costa:2016xah,schomerus,Karateev:2017jgd,Kravchuk:2017dzd}, is that one must know the all the spinning conformal blocks of the 
conformal group ${\rm SO}(5,1)$ and all tensor structures to determine the coefficients $\T$ in the dynamical system, or 
one must have another way to determine them -- perhaps directly from the representation theory of the conformal group. 

\medskip

{\bf Acknowledgements:} This paper is based on talks given at the conferences ``In Memoriam Rudolph Haag'' (Hamburg, September 2016)
and the ``Memorial Symphosium in Honor of W. Zimmermann'' (Munich, May 2016) where these results were announced and outlined.
I am grateful for support and hospitality at those meetings. I am also grateful to S. Rychkov 
for several discussions on the topic of this paper. Parts of this work were completed during the program ``Conformal Field Theories and Renormalisation Group Flows in dimensions $d>2$'' at the Galileo-Galilei-Institute in 2016. Hospitality is gratefully acknowledged. 
I would also like to thank M. Fr\" ob, J. Holland and Y. Honma for valuable discussions at an early state of this work. 

The research leading to these results has received funding from the European Research Council under the European Union's Seventh Framework Programme (FP7/2007-2013) / ERC grant agreement no QC \& C 259562.


\end{document}